\newtheorem{definition}{\textbf{Definition}}
\newproof{proof}{Proof}
\newdefinition{rmk}{Remark}
\newtheorem{thm}{Theorem}
\newtheorem{lemma}{Lemma}
\newtheorem{prop}{Proposition}
\def\dd{\mathinner{.\,.}}
\journal{Theoretical Computer Science}
\begin{document}

\begin{frontmatter}

\title{Linear-Time Superbubble Identification Algorithm for Genome Assembly}

\author[label1,label2]{Ljiljana Brankovic}
\ead{ljiljana.brankovic@newcastle.edu.au}

\author[label2]{Costas~S.~Iliopoulos}
\ead{costas.iliopoulos@kcl.ac.uk}
\author[label2]{Ritu Kundu}
\ead{ritu.kundu@kcl.ac.uk}
\author[label2]{Manal Mohamed}
\ead{manal.mohamed@kcl.ac.uk}
\author[label2]{Solon~P.~Pissis\corref{cor1}}
\ead{solon.pissis@kcl.ac.uk}
\author[label2]{Fatima Vayani}
\ead{fatima.vayani@kcl.ac.uk}

\cortext[cor1]{Corresponding author}   

\address[label1]{School of Electrical Engineering and Computer Science, The University of Newcastle,\\ Newcastle NSW 2308, Australia.}
  
\address[label2]{Department of Informatics, King's College London,\\ London WC2R 2LS, United Kingdom}

\begin{abstract}
DNA sequencing is the process of determining the exact order of the nucleotide bases of an individual's genome in order to catalogue sequence variation and understand 
its biological implications.
Whole-genome sequencing techniques produce masses of data in the form of short sequences known as reads. 
Assembling these reads into a whole genome constitutes a major algorithmic challenge.
Most assembly algorithms utilise de Bruijn graphs constructed from reads for this purpose.
A critical step of these algorithms is to detect typical motif structures in the graph caused by sequencing errors and genome repeats, and filter them out;
one such complex subgraph class is a so-called {\em superbubble}.
In this paper, we propose an $\mathcal{O}(n+m)$-time algorithm to detect all superbubbles in a directed acyclic graph with $n$ vertices and $m$ (directed) edges, improving the best-known $\mathcal{O}(m \log m)$-time algorithm by Sung et al.
\end{abstract}

\begin{keyword}
genome assembly \sep de Bruijn graphs \sep superbubble 

\end{keyword}
\end{frontmatter}

\section{Introduction}
\label{sec-intro}

Since the publication of the first draft of the human genome~\cite{lander2001,venter2001}, the field of genomics has changed dramatically. Recent developments in sequencing technologies (see~\cite{Solexa}, for example) have made it possible to sequence new genomes at a fraction of the time and cost required only a few years ago. With applications including sequencing the genome of a new species, an individual within a population,
and 
RNA molecules from a particular sample, 
sequencing remains at the core of genomics. 

Whole-genome sequencing creates masses of data, in the order of tens of gigabytes, in the form of short sequences (reads). Genome assembly involves piecing together these reads 
to form a set of contiguous sequences (contigs) representing the DNA sequence in the sample. Traditional assembly algorithms rely on the overlap-layout-consensus approach~\cite{batzoglou2005}, representing each read as a vertex in an overlap graph and each detected overlap as a directed edge between the vertices corresponding to overlapping reads. These methods have proved their use through numerous {\it de novo} genome assemblies~\cite{butler2008}.

Subsequently, a fundamentally different approach based on de Bruijn graphs was proposed~\cite{Eulerian}, where representation of data elements was organised around words of $k$ nucleotides, or $k$-mers, instead of reads. Unlike in an overlap graph, in a {\em de Bruijn graph}~\cite{deBruijn}, each $k-1$ nucleotide long prefix and suffix of the $k$-mers is represented as a vertex and each $k$-mer is represented as a directed edge between its prefix and suffix vertices. The marginal information contained by a $k$-mer is its last nucleotide. The sequence of those final nucleotides is called the sequence of the vertex.  In a de Bruijn graph, the assembly problem is reduced to finding  an  Eulerian path, that is, a trail  that visits each edge in the graph exactly once. 

However, sequencing errors and genome repeats significantly complicate the de Bruijn graph by adding false vertices and edges to it.  Efficient and robust filtering methods have been proposed to simplify the graph by filtering out motifs such as tips, bubbles, and cross links, which proved to be caused by sequencing errors~\cite{velvet}. In particular, a {\em bubble} consists of multiple directed unipaths (where a unipath is a path in which all internal vertices are of degree $2$) from a vertex $v$ to a vertex $u$ and is commonly caused by a small number of errors in the centre of  reads. Although these types of motifs are simple and can easily be identified and filtered out, more complicated motifs prove to be more challenging.

Recently, a complex generalisation of a bubble, the so-called superbubble, was proposed as an important subgraph class for analysing assembly graphs~\cite{OnoderaSS13}. A \emph{superbubble} is defined as a minimal subgraph $H$ in the de Bruijn graph with exactly one start vertex $s$ and one end vertex $t$ such that: (1) $H$ is a directed, acyclic, single-source ($s$), single-sink ($t$) graph (2) there is no edge from a vertex not in $H$ going to a vertex in $H \backslash \{s\}$ and (3) there is no edge from a vertex in $H \backslash \{t\}$ going to a vertex not in $H$. It is clear that many  superbubbles are formed as a result of sequencing errors, inexact repeats, diploid/polyploid genomes, or frequent mutations. Thus, efficient detection of superbubbles is essential for the application of genome assembly~\cite{OnoderaSS13}.

Onodera et al.~gave an $\mathcal{O}(nm)$-time algorithm to detect superbubbles,  where $n$ is the number of vertices and $m$ is the number of edges in the graph~\cite{OnoderaSS13}. Very recently, Sung et al.~gave an improved $\mathcal{O}(m \log m)$-time algorithm to solve this problem~\cite{SungSSBP14}.  The  algorithm partitions  the given graph into a set of subgraphs such that the set of superbubbles in all these subgraphs is the same as the set of superbubbles in the given graph. This set consists of subgraphs corresponding to each non-singleton strongly  connected component and a subgraph corresponding to the set of all the vertices involved in singleton strongly connected components.  Superbubbles are then detected in each subgraph; if it is cyclic, it is first converted into a directed acyclic subgraph by  means of depth-first search and  by duplicating some vertices.  

\textbf{Our Contribution}. Note that the cost of partitioning the graph and transforming it into the directed acyclic subgraphs is linear with respect to the size of the graph. However, computing the superbubbles in each directed acyclic subgraph requires $\mathcal{O}(m \log m)$ time~\cite{SungSSBP14}, which dominates the time bound of the algorithm. In this paper, we propose a new $\mathcal{O}(n+m)$-time algorithm to compute all superbubbles in a directed acyclic graph.

This paper is organised as follows: In Section 2 we define superbubbles and introduce some of their properties,  and in  Section 3 we outline the $\mathcal{O}(n+m)$-time algorithm for computing superbubbles in a directed acyclic graph. In Section 4 we explain a method to validate a candidate  superbubble in constant time. The algorithm is analysed in Section 5, while Section 6 provides some final remarks and directions for future research. 

\section{Properties}
\label{sec-superbubbles}

\noindent The concept of  superbubbles was introduced and formally defined in \cite{OnoderaSS13} as follows. 
\begin{definition} [\cite{OnoderaSS13}]\label{def:sup}
Let $G = (V, E)$ be a directed graph. For any  ordered pair of
distinct vertices $s$ and $t$,  $\langle s, t\rangle$ is called a {\sl superbubble} if it satisfies the following:
\begin{itemize}
\item {\bf reachability:}  $t$ is reachable from s;
\item {\bf matching:} the set of vertices reachable from $s$ without passing
through $t$ is equal to the set of vertices from which $t$ is reachable without passing
through $s$;
\item {\bf acyclicity:} the subgraph induced by $U$ is acyclic,  where $U$ is the set of
vertices satisfying the matching criterion;
\item {\bf minimality:} no vertex in $U$ other than $t$ forms a pair with $s$ that satisfies
the conditions above;
\end{itemize}
%
%
vertices $s$ and $t$, and $U \backslash \{s, t\}$ used in the above definition are the superbubble's {\sl entrance},
{\sl exit} and {\sl interior}, respectively. 

\end{definition}

We note that  a superbubble $\langle s, t\rangle$ in the above definition  is equivalent to a single-source, single-sink, directed acyclic subgraph  of $G$ with source $s$ and sink $t$, which does not have any cut vertices and preserves all in-degrees and out-degrees of vertices in $U\backslash \{s, t\}$, as well as the out-degree of $s$ and in-degree of $t$.

We next state a few important properties of superbubbles which enable the linear-time enumeration of superbubbles.  Lemmas~\ref{entrance} and~\ref{exit} were proved by Onodera et al.~\cite{OnoderaSS13} and Sung et al.~\cite{SungSSBP14}, respectively.

\begin{lemma} [\cite{OnoderaSS13}]\label{entrance}
Any vertex can be the entrance (respectively exit) of at most
one superbubble.
\end{lemma}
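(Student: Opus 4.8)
The plan is to argue by contradiction from the minimality clause of Definition~\ref{def:sup}. Suppose, towards a contradiction, that a vertex $s$ is the entrance of two superbubbles $\langle s, t_1\rangle$ and $\langle s, t_2\rangle$ with $t_1 \neq t_2$, and write $U_1$ and $U_2$ for the vertex sets satisfying the matching criterion of these two superbubbles. The key step is to extract, from a single path, the information that one of $t_1, t_2$ lies in the other superbubble's interior. I would fix a shortest path $P$ from $s$ to $t_1$, which exists by the reachability condition and which, being shortest, is simple; in particular $s$ occurs on $P$ only at its start and $t_1$ only at its end. I would then split into two cases according to whether $t_2$ occurs on $P$. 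If $t_2$ does not occur on $P$ at all, then $P$ witnesses that $t_1$ is reachable from $s$ without passing through $t_2$, whence $t_1 \in U_2$. If $t_2$ does occur on $P$ (necessarily as an internal vertex, since $t_2 \notin \{s, t_1\}$), then the prefix of $P$ from $s$ to $t_2$ witnesses that $t_2$ is reachable from $s$ without passing through $t_1$ --- here simplicity of $P$ is what guarantees $t_1$ is not on that prefix --- whence $t_2 \in U_1$.

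In the first case $t_1 \in U_2 \setminus \{s, t_2\}$; since $\langle s, t_1\rangle$ is a superbubble it satisfies the reachability, matching and acyclicity conditions, and therefore its existence contradicts the minimality of $\langle s, t_2\rangle$. In the second case, symmetrically, $t_2 \in U_1 \setminus \{s, t_1\}$ and $\langle s, t_2\rangle$ contradicts the minimality of $\langle s, t_1\rangle$. Either way we reach a contradiction, so a vertex is the entrance of at most one superbubble. The statement for the exit then follows by duality: the structural characterisation of superbubbles recorded immediately after Definition~\ref{def:sup} --- a single-source, single-sink, acyclic, cut-vertex-free induced subgraph with the stated degree-preservation property --- is manifestly invariant under reversing all edges of $G$, with the roles of $s$ and $t$ interchanged. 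Hence $\langle t, s\rangle$ is a superbubble of the edge-reversed graph precisely when $\langle s, t\rangle$ is a superbubble of $G$, and the exit claim is just the entrance claim read in the edge-reversed graph.

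The only genuinely delicate point in this otherwise short argument is the bookkeeping around the phrase ``passing through'': I would adopt the convention that a path passes through a vertex exactly when that vertex is an internal vertex of the path, and then the single nontrivial verification in each case is that the relevant prefix of $P$ avoids the forbidden endpoint internally, which holds automatically because $P$ is simple. Beyond this, the proof invokes only the matching and minimality clauses of Definition~\ref{def:sup} (and, for the exit half, the self-dual structural characterisation), and needs no further properties of superbubbles.
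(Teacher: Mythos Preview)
The paper does not give its own proof of this lemma: it is quoted from Onodera et al.\ and merely cited, so there is nothing in the paper to compare against. On the merits, your argument for the entrance half is correct and clean --- the case split on whether $t_2$ lies on a simple $s$--$t_1$ path shows that one of $t_1,t_2$ belongs to the other superbubble's vertex set, and then the minimality clause of Definition~\ref{def:sup} yields the contradiction exactly as you say.

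The exit half deserves one caveat. Duality is not automatic from Definition~\ref{def:sup} alone: the minimality clause fixes the entrance $s$ and quantifies over alternative exits $t'\in U$, so under edge reversal it turns into ``no $s'\in U\setminus\{s\}$ makes $\langle s',t\rangle$ satisfy the first three conditions,'' which is a different statement, and in particular a naive mirror of your entrance argument (taking two superbubbles $\langle s_1,t\rangle$, $\langle s_2,t\rangle$ and placing one $s_i$ inside the other's $U$) does not immediately contradict minimality as written. Your appeal to the structural characterisation recorded just after Definition~\ref{def:sup} is the right fix, since that description (single-source, single-sink, acyclic, no cut vertices, with the stated degree preservation) is manifestly invariant under edge reversal; just be aware that the paper asserts this equivalence without proof, so a fully self-contained argument would need to supply it.
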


Note that Lemma~\ref{entrance} does not exclude the possibility that a vertex
is the entrance of a superbubble and the exit of another superbubble. 

\begin{lemma}[\cite{SungSSBP14}] \label{exit}
Let $G$ be a directed acyclic graph. We have the following two observations.

1) Suppose $(p, c)$ is an edge in $G$, where $p$ has one child and $c$ has one parent, then $\langle p, c \rangle$ is a superbubble in $G$.

2) For any superbubble $\langle s, t \rangle$ in $G$, there must exist some parent $p$ of $t$ such that $p$ has exactly one child $t$.
\end{lemma}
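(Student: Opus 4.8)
The plan is to prove the two statements separately, in both cases arguing directly from Definition~\ref{def:sup}.

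For part (1): suppose $(p,c)$ is an edge where $p$ has exactly one child (namely $c$) and $c$ has exactly one parent (namely $p$). I would take $s=p$, $t=c$ and verify the four conditions. Reachability is immediate from the edge $(p,c)$. For the matching condition: since $c$ is the unique child of $p$, the only vertex reachable from $p$ without passing through $c$ is $p$ itself; symmetrically, since $p$ is the unique parent of $c$, the only vertex from which $c$ is reachable without passing through $p$ is $c$ itself. One must be a little careful here: ``the set of vertices reachable from $s$ without passing through $t$'' should be read as including $s$ and excluding $t$, and dually for the other set, so both sets equal $\{p,c\}$ and hence $U=\{p,c\}$. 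Acyclicity of the subgraph induced by $U$ is inherited from $G$ being acyclic. Minimality is vacuous since $U\setminus\{s,t\}=\emptyset$. Hence $\langle p,c\rangle$ is a superbubble.

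For part (2): let $\langle s,t\rangle$ be a superbubble with interior vertex set $U\setminus\{s,t\}$. Since $G$ is acyclic, the induced subgraph on $U$ is a DAG with sink $t$; I would pick $p$ to be a parent of $t$ that is ``closest'' to $t$ in a topological sense — concretely, among all parents of $t$ in $U$, choose $p$ maximal in a topological order of the DAG induced on $U$ (such a $p$ exists because $t$ is reachable from $s$, so $t$ has at least one parent in $U$). The claim is that this $p$ has $t$ as its only child. Suppose instead $p$ has another child $c\neq t$. By condition (3) of the superbubble definition (no edge from $U\setminus\{t\}$ leaves $U$), we have $c\in U$, and $c\neq s$ since $s$ is a source. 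Then $c$ is an interior vertex, so by the matching condition $t$ is reachable from $c$; following such a path gives a parent $p'$ of $t$ with $p'$ reachable from $c$, hence $p'$ strictly later than $p$ in the topological order (as $p\to c\to\cdots\to p'$), contradicting maximality of $p$. Therefore $p$'s only child is $t$.

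The main obstacle I anticipate is part (2): making precise the choice of $p$ and ruling out the degenerate cases. One subtlety is ensuring the path from $c$ back to $t$ does not revisit $p$ (it cannot, by acyclicity) and that it stays inside $U$ (guaranteed by conditions (2) and (3)); another is the possibility $c=s$, which is excluded because $s$ has in-degree-preserving role as a source in the induced subgraph — more carefully, if there were an edge $p\to s$ with $p\in U$, then $s$ would be reachable from $s$ through a nontrivial path inside $U$ not passing through $t$, forcing a cycle and contradicting acyclicity. Once these edge cases are handled, the topological-order argument closes the proof cleanly.
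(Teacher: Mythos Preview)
The paper does not actually prove Lemma~\ref{exit}; it is quoted from Sung et al.~\cite{SungSSBP14} and stated without argument, so there is no in-paper proof to compare against directly. Your proposal is correct on both parts and would stand as a self-contained proof.

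For context, the paper \emph{does} prove the dual statement, Lemma~\ref{start}, and does so in one line: if every child of $s$ had a second parent, then one of those extra parents would either lie outside $U$ (violating matching) or force a cycle. Your approach to part~(2) is structurally different and more explicit: instead of a global contradiction over all parents, you pick the topologically latest parent $p$ of $t$ and show that any extra child of $p$ would yield a strictly later parent of $t$, contradicting maximality. This buys you a constructive identification of which parent $p$ works, and it makes visible the edge cases (why the alternative child $c$ lies in $U$, why $c\neq s$, why the path from $c$ back to $t$ stays inside $U$ and avoids $p$) that the paper's one-line proof of Lemma~\ref{start} glosses over. Either style transfers to the other lemma; yours is just more careful.
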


In this paper we start by showing another important property of superbubbles that is closely-related to Lemma~\ref{exit}.

\begin{lemma}\label{start}
For any superbubble $\langle s, t \rangle$ in a directed acyclic graph $G$, there must exist some child $c$ of $s$ such that $c$ has exactly one parent $s$.
\end{lemma}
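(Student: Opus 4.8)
The plan is to derive Lemma~\ref{start} from part~(2) of Lemma~\ref{exit} by exploiting the symmetry of the superbubble notion under reversal of all edges. Let $G^{R} = (V, E^{R})$ be the graph obtained from $G$ by reversing the orientation of every edge; since $G$ is acyclic, so is $G^{R}$. The first step is to establish that $\langle s, t\rangle$ is a superbubble of $G$ if and only if $\langle t, s\rangle$ is a superbubble of $G^{R}$. This is cleanest through the reformulation recorded immediately after Definition~\ref{def:sup}: a superbubble is an induced single-source, single-sink, directed acyclic subgraph with no cut vertices that preserves the out-degree of its source, the in-degree of its sink, and both the in- and out-degrees of its interior vertices. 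Every ingredient of this description is invariant under reversing all edges while exchanging the roles of source and sink — the underlying undirected graph, and hence its set of cut vertices, is unchanged; acyclicity is unchanged; a single-source single-sink graph stays single-source single-sink; and the degree conditions map to one another because reversal interchanges in-degree and out-degree (the out-degree of the source of $\langle s,t\rangle$ in $G$ becomes the in-degree of the sink of $\langle t,s\rangle$ in $G^{R}$, and so on). Hence reversal carries superbubbles of $G$ bijectively to superbubbles of $G^{R}$, with entrance and exit swapped.

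Granting this, the lemma is immediate. Since $\langle s, t\rangle$ is a superbubble of $G$, the pair $\langle t, s\rangle$ is a superbubble of the directed acyclic graph $G^{R}$. Applying part~(2) of Lemma~\ref{exit} to $\langle t, s\rangle$ in $G^{R}$ (whose exit is $s$) yields a parent $p$ of $s$ in $G^{R}$ such that $s$ is the unique child of $p$ in $G^{R}$. Reversing edges back to $G$, the vertex $p$ is a child of $s$ in $G$ and $s$ is the unique parent of $p$ in $G$; setting $c = p$ finishes the proof.

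The one point that deserves care — and, I expect, the only real obstacle — is the justification of the reversal symmetry. Via the reformulation it is transparent, as above. Working directly from Definition~\ref{def:sup} would be more delicate: reachability, the matching condition, and acyclicity are each manifestly symmetric under edge reversal (in particular the matching set $U$ is literally the same set of vertices whether one reads Definition~\ref{def:sup} for $\langle s,t\rangle$ in $G$ or for $\langle t,s\rangle$ in $G^{R}$, since the two sets appearing in the matching clause are interchanged by reversal), but the \emph{minimality} clause is phrased only in terms of pairs of the form $\langle s,\cdot\rangle$ and is therefore not visibly symmetric; matching it against the corresponding clause for $\langle t,s\rangle$ in $G^{R}$ essentially requires the structural theory of superbubbles. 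This is precisely why I would frame the whole argument in terms of the cut-vertex reformulation from the outset.
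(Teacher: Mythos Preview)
Your argument is correct, but it takes a genuinely different route from the paper's. The paper dispatches the lemma in two lines by direct contradiction: assuming every child of $s$ has a second parent, acyclicity of the induced subgraph on $U$ (with unique source $s$) forces at least one such second parent to lie outside $U$, contradicting the matching condition. No appeal to Lemma~\ref{exit}, to reversal, or to the cut-vertex reformulation is made.

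Your approach is more structural: you pass to the reverse graph $G^{R}$, establish that superbubbles are carried to superbubbles with entrance and exit interchanged, and then invoke part~(2) of Lemma~\ref{exit}. This has the virtue of making the duality between Lemmas~\ref{exit} and~\ref{start} explicit rather than leaving it as an unremarked parallel, and once the symmetry is in hand the lemma is literally a corollary. The price is the reversal symmetry itself, which---as you correctly diagnose---is not obvious from Definition~\ref{def:sup} because the minimality clause is asymmetric in $s$ and $t$; routing through the cut-vertex reformulation recorded after Definition~\ref{def:sup} is the right fix, and your verification that each ingredient of that reformulation is reversal-invariant is sound. The paper's argument, by contrast, is entirely self-contained and needs neither the reformulation nor the earlier lemma, though it is terse enough that the reader must reconstruct why acyclicity forces an external parent.
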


\begin{proof}
Assume that all the children of $s$ have more than one parent. Then, there must be some cycle or some child $c$ which has a parent that does not belong to the superbubble $\langle s, t \rangle$. This is a contradiction. \qed
\end{proof}

\section{Finding a Superbubble in a Directed Acyclic Graph}\label{sec:find}
 
The main contribution of this paper is an algorithm \Algo{SuperBubble} that reports  {\em all}  superbubbles in a directed acyclic graph $G=(V,E)$ with exactly  one source (the vertex with in-degree 0) and exactly one sink  (vertex with out-degree 0). If $G$ has more than one source then a new source vertex  $r'$ is added to $V$ and an edge from $r'$ to each  existing source  is added to $E$. The same is done if $G$ has more than one sink; in this case, a new sink vertex $t'$ is added to $V$ and an edge from each existing sink to $t'$ is  added to $E$. If such preprocessing is done, then among the superbubbles reported by the algorithm, only those which do not start at $r'$ and do not end at $t'$ represent the superbubbles in the original graph. For the sake of simplicity, for the rest of this paper and in all the propositions, lemmas and theorems that follow, we use $G$ to denote a directed acyclic graph  with exactly one source and exactly one sink, and we use $n$ and $m$ to denote the number of its vertices and edges respectively, that is, for $G=(V,E)$ we have  $n= |V|$ and $m=|E|$.


A {\em topological ordering} \textit{ordD} of $G$ maps each vertex to an integer between $1$ and $n$, such that $\textit{ordD}[x] < \textit{ordD}[y]$ holds for all edges $(x,y) \in E$. There exists a classical linear-time algorithm for computing the topological ordering of a directed acyclic graph~\cite{Cormen2001,Tarjan67}.  
In its recursive form, the algorithm visits an unvisited vertex of the graph, finds its unvisited neighbour, say $v$, and performs another topological sort starting from $v$. The algorithm {\em returns} if the current vertex does not have unvisited neighbours. Algorithm \Algo{TopologicalSort}, given below, is a simplified version that takes as input a single-source, single-sink directed acyclic graph, and produces a topological ordering of vertices. For the  graph $G$ in Figure~\ref{figu-graph}, \Algo{TopologicalSort} produces an ordering given in Figure~\ref{figu-topological}.
%
%

\bigskip\noindent

\begin{algo}{TopologicalSort}{G}
\SET{\textit{order}}{n}
\DOFOR{\mbox{each vertex $v \in V$}}
\SET{state[v]}{unvisited}
\OD

\CALL{RecursiveTopologicalSort}{G,\textit{source}}

\end{algo}
\bigskip

\begin{algo}{RecursiveTopologicalSort}{G, v}
\SET{state[v]}{visited}
\DOFOR{\mbox{each vertex $w$ adjacent to $v$}}
\IF{ state[w] = unvisited}
  \CALL{RecursiveTopologicalSort}{G,w}
\FI
\OD
\SET{ordD[v]}{order}
\SET{order}{order-1}
\end{algo}




\bigskip

\begin{prop} \label{ts1}
For any topological ordering $\textit{ordD}$ of vertices in graph $G$, if  vertex $u$ is reachable from $v$, that is, if there is a path from $v$ to $u$, then \textit{ordD}$[v] <$ \textit{ordD}$[u]$.
\end{prop}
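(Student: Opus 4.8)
The plan is to prove Proposition~\ref{ts1} by induction on the length of the path from $v$ to $u$. The base case is a path of length one, i.e., a single edge $(v,u)\in E$; here the defining property of a topological ordering gives $\textit{ordD}[v]<\textit{ordD}[u]$ immediately. For the inductive step, suppose the claim holds for all paths of length at most $k$, and let $v = x_0, x_1, \dots, x_{k+1} = u$ be a path of length $k+1$. Then $x_0,\dots,x_k$ is a path of length $k$, so by the inductive hypothesis $\textit{ordD}[v] = \textit{ordD}[x_0] < \textit{ordD}[x_k]$, and $(x_k, x_{k+1})$ is an edge, so $\textit{ordD}[x_k] < \textit{ordD}[x_{k+1}] = \textit{ordD}[u]$. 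Chaining these two inequalities yields $\textit{ordD}[v] < \textit{ordD}[u]$, completing the induction.

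Alternatively, one can phrase this as a direct argument without explicit induction: a path from $v$ to $u$ is a finite sequence of edges, and applying the topological-ordering inequality to each edge in turn produces a strictly increasing chain of values $\textit{ordD}[x_0] < \textit{ordD}[x_1] < \cdots < \textit{ordD}[x_\ell]$, whence the first is strictly less than the last by transitivity of $<$. I would probably present the induction version since it is the cleanest and makes the role of transitivity explicit, but either is essentially a one-line observation.

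There is no real obstacle here; the statement is an elementary consequence of the definition of topological ordering and is included only because it is used repeatedly in later arguments (in particular, it guarantees that vertices of a superbubble occupy a contiguous-looking range and that the entrance precedes the exit). The only thing to be mildly careful about is that "reachable" should be taken to include a path of length zero if $u = v$ — but since the proposition explicitly speaks of distinct behaviour under "there is a path from $v$ to $u$" and the surrounding text uses reachability for distinct vertices, I would simply assume $v \neq u$ and a path of length at least one, so that the strict inequality is meaningful. With that understood, the proof is the short induction sketched above.
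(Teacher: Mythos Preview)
Your proof is correct and essentially identical to the paper's own argument: the paper also distinguishes the length-$1$ case (a single edge) and then, for a longer path $v,x_1,\dots,x_{k-1},u$, chains the edge-wise inequalities $\textit{ordD}[v]<\textit{ordD}[x_1]<\cdots<\textit{ordD}[u]$ and invokes transitivity. Your inductive phrasing and the paper's direct chaining are the same observation, and your ``alternative'' version is in fact exactly what the paper writes.
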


\begin{proof}
If the path from $v$ to $u$ is of length $1$, i.e.,~there is an edge $(v,u)$, then by the definition of topological ordering we have \textit{ordD}$[v] < $ \textit{ordD}$[u]$. Otherwise, we denote the path   from $v$ to $u$  of length $k$,  $k>1$,  as $v,x_1,\ldots,x_{k-1},u$. Then by the definition of topological ordering we have \textit{ordD}$[v] < $ \textit{ordD}$[x_1] < \cdots <$  \textit{ordD}$[u]$. Transitively, we have \textit{ordD}$[v] < $ \textit{ordD}$[u]$.
\qed

\end{proof}
Importantly, in this paper we do not consider all topological orderings of graph $G$ but only those obtained by algorithm \Algo{TopologicalSort}. Note that this algorithm finds a  directed spanning tree $T$ of $G$ rooted at the \textit{source}, which contains a path from the \textit{source} to any vertex in $G$. The directed spanning tree $T$ of $G$  obtained by algorithm \Algo{TopologicalSort} is presented by bold edges in Figure~\ref{figu-topological}. It may be worth mentioning that a directed rooted tree is also known as \textit{arborescence}.   

We next present another important property of topological ordering obtained by algorithm \Algo{TopologicalSort}.

\begin{prop}\label{ts2}
Let \textit{ordD} and $T$ be a topological ordering and a directed rooted spanning tree of graph $G$ obtained by algorithm \Algo{TopologicalSort}. If  there is a path in $T$  from a vertex  $v$ to a vertex $u$, then, for each vertex $w$ such that   $\textit{ordD}[v]<  \textit{ordD}[w] < \textit{ordD}[u]$, there is a path from $v$ to $w$.
\end{prop}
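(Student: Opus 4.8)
The plan is to read Algorithm \Algo{TopologicalSort} as a depth-first search and to exploit the standard \emph{parenthesis theorem} for DFS~\cite{Cormen2001}. First I would attach to every vertex $x$ its discovery time $d[x]$ (the moment \Algo{RecursiveTopologicalSort}$(G,x)$ is invoked) and its finishing time $f[x]$ (the moment that call returns and $\textit{ordD}[x]$ is set). Two facts then drive everything. Since \textit{order} is initialised to $n$ and decremented every time some $\textit{ordD}$ value is assigned, the values $\textit{ordD}$ are handed out in reverse order of finishing, so that $\textit{ordD}[x] < \textit{ordD}[y]$ holds if and only if $f[x] > f[y]$. And $T$ is exactly the DFS tree, so a path in $T$ from $v$ to $u$ means $u$ is a descendant of $v$; moreover the parenthesis theorem says that for any two vertices the closed intervals $[d[\cdot],f[\cdot]]$ are either nested or disjoint, with $x$ a descendant of $y$ precisely when $[d[x],f[x]]\subseteq[d[y],f[y]]$.

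Next I would translate the hypotheses into inequalities between these times. From ``there is a path in $T$ from $v$ to $u$'' we get that $u$ is a descendant of $v$, hence $d[v] < d[u] < f[u] < f[v]$. From $\textit{ordD}[v] < \textit{ordD}[w] < \textit{ordD}[u]$ together with the correspondence above we get $f[v] > f[w] > f[u]$. Chaining these yields $d[v] < d[u] < f[u] < f[w] < f[v]$; in particular the single number $f[w]$ lies strictly between $d[v]$ and $f[v]$, i.e. strictly inside the interval $[d[v],f[v]]$.

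To finish, I would show that $w$ is in fact a descendant of $v$ in $T$. Because $f[w]$ lies strictly inside $[d[v],f[v]]$, the intervals $[d[w],f[w]]$ and $[d[v],f[v]]$ are not disjoint, so by the parenthesis theorem one contains the other; and since $f[w] < f[v]$, containment in the direction $[d[v],f[v]]\subseteq[d[w],f[w]]$ is impossible, leaving $[d[w],f[w]]\subseteq[d[v],f[v]]$. Hence $w$ is a descendant of $v$ in $T$, and the path from $v$ to $w$ along tree edges of $T$ (which are edges of $G$) is the desired path from $v$ to $w$ in $G$.

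I do not expect a genuine obstacle here beyond careful bookkeeping, but two points deserve attention. One is getting the direction of the $\textit{ordD}$--finishing-time correspondence right. The more important one is noticing that the hypothesis $\textit{ordD}[w] < \textit{ordD}[u]$ --- not merely $\textit{ordD}[v] < \textit{ordD}[w]$ --- is indispensable: it is precisely what forces $f[w] > f[u] > d[v]$ and thereby pins $f[w]$ inside the active interval of $v$, ruling out the possibility that $w$ sits in a subtree that \Algo{TopologicalSort} explores and finishes entirely before $v$ is ever discovered, in which case no path from $v$ to $w$ need exist. For readers who prefer not to invoke the parenthesis theorem by name, the same reasoning can be carried out directly from the stack discipline of the recursion: if $w$ finishes after $u$ but before $v$, then the call for $w$ is opened while the call for $v$ is still on the stack, hence $w$ is discovered within the subtree rooted at $v$.
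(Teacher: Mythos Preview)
Your proof is correct and follows essentially the same line as the paper's: both establish that any $w$ with $\textit{ordD}[v]<\textit{ordD}[w]<\textit{ordD}[u]$ must lie in the DFS subtree rooted at $v$, and hence is reachable from $v$ along tree edges. The paper's version simply asserts this containment from the structure of $T$, whereas you make it rigorous by introducing discovery/finishing times and invoking the parenthesis theorem to pin down exactly why $f[u]<f[w]<f[v]$ forces $[d[w],f[w]]\subseteq[d[v],f[v]]$; this is a welcome addition, since the paper's one-line justification leaves precisely that step to the reader.
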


\begin{proof}
Recall that  $T$ contains a path from the root to each  vertex of the tree; this is also true for each subtree of $T$.  Furthermore, if there is a path from $v$ to $u$ in $T$, then $u$ is contained in a subtree of $T$ rooted at $v$, and each $w$ such that $\textit{ordD}[v]<  \textit{ordD}[w] < \textit{ordD}[u]$ is also contained in the subtree rooted at $v$ (but not in the subtree rooted at $u$). Therefore, there is a path from $v$ to $w$, for each $w$  such that $\textit{ordD}[v]<  \textit{ordD}[w] < \textit{ordD}[u]$.
\qed 
\end{proof}

%
%

We next show that in an ordering obtained by  \Algo{TopologicalSort}, a vertex has the topological ordering between the orderings of the entrance and the exit of a superbubble if and only if it belongs to the superbubble. 

\begin{lemma}\label{ts}
Let  graph $G$ contain a superbubble $\langle s, t\rangle$. Then a topological ordering obtained by  \Algo{TopologicalSort} has the following properties.
\begin{enumerate}
\item For all $x$ such that $x \in U \backslash \{s, t\}$, $\textit{ordD}[s] < \textit{ordD}[x] < \textit{ordD}[t]$.
\item  For all $y$ such that $y \not\in U$,  $\textit{ordD}[y]< \textit{ordD}[s]$ or $\textit{ordD}[y] > \textit{ordD}[t]$. 
\end{enumerate}
\end{lemma}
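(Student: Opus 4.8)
I would prove the two statements essentially together, exploiting Proposition~\ref{ts1} for the "easy" direction and Propositions~\ref{ts1} and~\ref{ts2} for the "hard" direction. First, note that by the reachability and matching criteria of Definition~\ref{def:sup}, every $x \in U$ is reachable from $s$ and $t$ is reachable from $x$; since $G$ is acyclic and $s \ne t$, Proposition~\ref{ts1} immediately gives $\textit{ordD}[s] \le \textit{ordD}[x] \le \textit{ordD}[t]$ for all $x \in U$, with strict inequality at both ends when $x \in U \backslash \{s,t\}$ (because $x$ is distinct from both $s$ and $t$ and $\textit{ordD}$ is injective). That settles part~(1).

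For part~(2), I would argue by contradiction: suppose some $y \notin U$ satisfies $\textit{ordD}[s] < \textit{ordD}[y] < \textit{ordD}[t]$. The key is to locate a vertex of $U$ that lies on a $T$-path straddling $y$ in the topological order. By Lemma~\ref{start} there is a child $c$ of $s$ with $s$ as its only parent; hence the tree edge into $c$ in $T$ comes from $s$, so $s$ is an ancestor of $t$ along... — more directly, I would use the fact that $T$ contains a path from the \textit{source} to $t$, and that this path must enter $U$ precisely at $s$: by the matching/minimality structure, the only edge from outside $U$ into $U$ lands at $s$ (this is exactly the "no cut vertex / in-degree-preserving" reading of a superbubble noted after Definition~\ref{def:sup}, together with the observation that $s$'s in-neighbours lie outside $U\setminus\{s\}$). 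So the $T$-path from the source to $t$ passes through $s$, and the portion from $s$ to $t$ lies in $T$. Now apply Proposition~\ref{ts2} to this $T$-path from $s$ to $t$: every vertex $w$ with $\textit{ordD}[s] < \textit{ordD}[w] < \textit{ordD}[t]$ is reachable from $s$; in particular $y$ is reachable from $s$. Reachability from $s$ without passing through $t$ is not yet guaranteed, so I must rule out that every $s$-$y$ path goes through $t$ — but that would force $\textit{ordD}[t] < \textit{ordD}[y]$ by Proposition~\ref{ts1}, contradicting $\textit{ordD}[y] < \textit{ordD}[t]$. Hence $y$ is reachable from $s$ without passing through $t$, so by the matching criterion $y \in U$, contradicting $y \notin U$.

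**The main obstacle.** The delicate point is the claim that the $T$-path from the source to $t$ actually passes through $s$ — equivalently, that $s$ is a $T$-ancestor of $t$. This is what lets me invoke Proposition~\ref{ts2} with endpoints $s$ and $t$ rather than with the source and $t$ (which would only give reachability from the source, not from $s$). I expect to establish it using the superbubble's defining property that no edge enters $U \backslash \{s\}$ from outside $U$: the source-to-$t$ path in $T$ starts outside $U$ (if $s$ is not the source) and ends at $t \in U$, so it first enters $U$ via an edge whose head is in $U$; that head cannot be in $U \backslash \{s\}$, hence it is $s$. If $s$ is itself the source the claim is trivial. Once this is in hand, the rest is routine application of the two propositions and the matching criterion, together with injectivity of $\textit{ordD}$ to upgrade the non-strict inequalities to strict ones in part~(1). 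I would also remark that parts~(1) and~(2) together yield the clean characterisation advertised before the lemma: $\textit{ordD}[s] < \textit{ordD}[x] < \textit{ordD}[t]$ if and only if $x \in U \backslash \{s,t\}$.
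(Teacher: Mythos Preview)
Your proof is correct and follows essentially the same route as the paper's: part~(1) via Proposition~\ref{ts1}, part~(2) by contradiction using Proposition~\ref{ts2} applied to the $T$-path from $s$ to $t$. You are in fact more explicit than the paper on the two points it glosses over---why $s$ must be a $T$-ancestor of $t$ (the paper simply asserts this from the single-source/single-sink structure of the superbubble, whereas you derive it from the ``no edge into $U\setminus\{s\}$ from outside'' property), and why the resulting $s$-to-$y$ path can be taken to avoid $t$; your brief detour through Lemma~\ref{start} is unnecessary but harmless.
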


\begin{proof}
Recall that $U$ is the set of vertices forming a superbubble  (see Definition~\ref{def:sup}).
\begin{enumerate}
\item Since there is a path from the start $s$ of the superbubble to all $x\in U \backslash \{s\}$, by Proposition~\ref{ts1} we have $\textit{ordD}[s] < \textit{ordD}[x]$  for all $x$ such that $x \in U \backslash \{s\}$. 
Similarly,  since there is a path from all $x\in U \backslash \{t\}$ to the end $t$ of the superbubble, by Proposition~\ref{ts1} we have $\textit{ordD}[x] < \textit{ordD}[t]$    for all $x$ such that $x \in U \backslash \{t\}$. Therefore, for all $x$ such that $x \in U \backslash \{s, t\}$, $\textit{ordD}[s] < \textit{ordD}[x] < \textit{ordD}[t]$.

\item  Suppose the opposite, that is, suppose that there exists some $y \not\in U$ such that $\textit{ordD}[s] < \textit{ordD}[y] < \textit{ordD}[t]$. Since the superbubble $\langle s, t\rangle$ is itself a single-source, single-sink subgraph of $G$, any directed spanning  tree of $G$ rooted at the \textit{source}, will contain a path from $s$ to $t$. Then by Proposition~\ref{ts2} there also exists a path from $s$ to $y$ in $T$ and thus also in $G$. However, by the definition of the superbubble, the only vertices reachable from $s$ without going through $t$ are the internal vertices of the superbubble --- a contradiction. Therefore, for all $y$ such that $y \not\in U$, either $\textit{ordD}[y]< \textit{ordD}[s]$ or $\textit{ordD}[y] > \textit{ordD}[t]$. 
\qed
\end{enumerate}
\end{proof}

\begin{figure}[!t]
\begin{center}
\VCDraw[1.3]{
\begin{VCPicture}{(18,2)(0,-3)}
\State[v_1]{(0,0)}{1}
\State[v_2]{(2,0)}{2}
\State[v_3]{(4,0)}{3}
\State[v_5]{(6,0)}{5}
\State[v_6]{(8,0)}{6}
\State[v_7]{(10,0)}{7}
\State[v_8]{(12,0)}{8}
\State[v_4]{(8,2)}{4}
\State[v_9]{(6,-1.5)}{9}
\State[v_{10}]{(8,-1.5)}{10}
\State[v_{11}]{(6,-3)}{11}
\State[v_{12}]{(9,-3)}{12}

\State[v_{13}]{(14,0)}{13}
\State[v_{14}]{(16,0)}{14}
\State[v_{15}]{(15,-2)}{15}

\EdgeL{1}{2}{}
\EdgeL{2}{3}{}
\EdgeL{3}{5}{}
\EdgeL{5}{6}{}
\EdgeL{6}{7}{}
\EdgeL{7}{8}{}
\EdgeL{9}{10}{}
\EdgeL{11}{12}{}
\EdgeL{8}{13}{}
\EdgeL{13}{14}{}
\EdgeL{13}{15}{}
\EdgeL{15}{14}{}

\EdgeL{5}{9}{}
\EdgeL{6}{10}{}
\ArcR{10}{7}{}
\ArcR{3}{11}{}
\EdgeL{11}{12}{}
\ArcR{12}{8}{}

\ArcL{3}{4}{}
\ArcL{4}{8}{}

\VArcR[.4]{arcangle=65}{1}{3}{}
\VArcR[.4]{arcangle=65}{8}{14}{}
\end{VCPicture}
}
\end{center}
\caption{A graph $G$ with set of vertices $V = \{v_1,v_2, \cdots, v_{15} \}$. Note that $G$ has as a single source  $v_1$ and as a single sink $v_{14}$.}
\label{figu-graph}
\end{figure}

\begin{figure}[!t]
\begin{center}
\VCDraw[1.2]{
\begin{VCPicture}{(18,4)(-1,-1)}
\setlength{\unitlength}{.1in}
\State[v_1]{(0,0)}{1}
\State[v_2]{(1.2,0)}{2}
\State[v_3]{(2.4,0)}{3}
\State[v_{11}]{(3.6,0)}{11}
\State[v_{12}]{(4.8,0)}{12}
\State[v_5]{(6,0)}{5}
\State[v_9]{(7.2,0)}{9}
\State[v_6]{(8.4,0)}{6}
\State[v_{10}]{(9.6,0)}{10}
\State[v_7]{(10.8,0)}{7}
\State[v_{4}]{(12,0)}{4}
\State[v_{8}]{(13.2,0)}{8}
\State[v_{13}]{(14.4,0)}{13}
\State[v_{15}]{(15.6,0)}{15}
\State[v_{14}]{(16.8,0)}{14}
\VArcR[.4]{arcangle=65}{7}{8}{}
\VArcR[.4]{arcangle=65}{9}{10}{}
\EdgeL{11}{12}{}
\EdgeL{10}{7}{}
\EdgeL{15}{14}{}
\VArcR[.4]{arcangle=65}{8}{14}{}
\VArcR[.4]{arcangle=65}{12}{8}{}
\VArcR[.4]{arcangle=65}{1}{3}{}


\ChgEdgeLineWidth{3}
\EdgeL{1}{2}{}
\EdgeL{2}{3}{}
\VArcR[.4]{arcangle=65}{3}{4}{}
\EdgeL{4}{8}{}
\EdgeL{8}{13}{}
\VArcR[.4]{arcangle=65}{13}{14}{}
\EdgeL{13}{15}{}
\VArcR[.4]{arcangle=65}{3}{5}{}
\VArcR[.4]{arcangle=65}{5}{6}{}
\VArcR[.4]{arcangle=65}{6}{7}{}
\EdgeL{6}{10}{}
\EdgeL{5}{9}{}
\EdgeL{3}{11}{}
\EdgeL{11}{12}{}

\end{VCPicture}
}
\end{center}
\caption{Vertices of Figure~\ref{figu-graph} in topological order, where $\textit{ordD}[v_1] = 1$,  $\textit{ordD}[v_2] = 2$, $\textit{ordD}[v_3] = 3$, $\textit{ordD}[v_4] = 11$, $\textit{ordD}[v_5] = 6$, $\textit{ordD}[v_6] = 8$, $\textit{ordD}[v_7] = 10$, $\textit{ordD}[v_8] = 12$, $\textit{ordD}[v_9] = 7$, $\textit{ordD}[v_{10}] = 9$, $\textit{ordD}[v_{11}] = 4$, $\textit{ordD}[v_{12}] = 5$, $\textit{ordD}[v_{13}] = 13$, $\textit{ordD}[v_{14}] = 15$ and $\textit{ordD}[v_{15}] = 14$}
\label{figu-topological}
\end{figure}

Algorithm \Algo{SuperBubble} starts by topologically ordering the vertices of graph $G$ and then  
checks each vertex in $V$, in topological order, to identify whether it is an exit or  an entrance candidate (or both). According to Lemmas~\ref{exit} and~\ref{start}, a vertex $v$ is an exit  candidate if it has at least one parent with exactly one child (out-degree 1) and an entrance candidate if it has at least one child with exactly one parent (in-degree 1). There are at most $2n$ candidates, thus the cost of constructing a doubly-linked list of all the candidates is  linear in $n$. The elements of the candidates list are  ordered according to $\textit{ordD}$, and each candidate  is labelled as an exit or an entrance candidate. Note that if a vertex $v$ is both an exit and an entrance candidate, then $v$ appears twice in the candidates list, first as an exit and then as an entrance  (Figure~\ref{figu-cands}). 


\begin{figure}[!t]
\setlength{\tabcolsep}{5pt}
$$\begin{tabularx}{1\textwidth}{l@{}cccccccccccccccccccccccccc}
$j$ & & 1 & 2 & 3 & 4 & 5 & 6 & 7 & 8 & 9 & 10 &11&12&13&14&15 \\
\hline
~& &$v_1$ & $v_2$ & $v_3$ & $v_{11}$ & $v_{12}$ & $v_5$ & $v_9$ & $v_6$ & $v_{10}$ & $v_7$& $v_4$& $v_8$ &$v_{13}$&$v_{15}$&$v_{14}$\\

\mbox{entrance} & & $s_1$ &  & $s_2$ & $s_3$ &  & $s_4$ &  &  &  &  &  & $s_5$&$s_6$&&\\

\mbox{exit} & &  & & $t_1$ &  & $t_2$ &  &  &  & $t_3$ & $t_4$ &  & $t_5$ & & &$t_6$ \\
\end{tabularx}$$
\caption{Candidates list for Figure~\ref{figu-graph},  $\textit{candidates} = \{ v_1$(entrance),   $v_3$(exit), $v_3$(entrance), $v_{11}$(entrance), $v_{12}$(exit), $v_5$(entrance), $v_{10}$(exit), $v_7$(exit), $v_8$(exit), $v_8$(entrance), $v_{13}$(entrance), $ v_{14}$(exit)$\}.$ Note that both $v_3$ and $v_8$ appear twice in the list.}

\label{figu-cands}
\end{figure}



Algorithm \Algo{SuperBubble} processes the candidates list of graph $G$  in decreasing topological order (backwards). Let  $v'_1,v'_2,\ldots, v'_{\ell}$  be the list of candidates. The algorithm performs the following:
\begin{itemize}
\item If $v'_j$ is an entrance candidate, then delete $v'_j$;
\item If $v'_j$ is an exit candidate, then subroutine \Algo{ReportSuperBubble} is called to find  and report the  superbubble ending at $v'_j$, that is, the superbubble $\langle v'_i,v'_j \rangle$, for some entrance candidate $v'_i$. Subroutine \Algo{ReportSuperBubble} also recursively finds and reports all nested superbubbles between $v'_i$ and $v'_j$.   
\end{itemize}

For clarity of presentation, we next provide a list and a short description of subroutines and arrays used by algorithm \Algo{SuperBubble} and subroutine \Algo{ReportSuperBubble}. Before that, it is worth mentioning that $\textit{candidates}$ is a doubly-linked list of entrance and exit candidates; specifically, an element of the list is a vertex along with a label specifying if it is an entrance or an exit candidate. For the sake of simplicity of the following routines, we use a vertex and its corresponding candidate (element in the candidates list) interchangeably. This does not add to the complexity of the algorithm as we can use an auxiliary array $v$, where $v[i]$ stores a pointer to the corresponding element $v_i$ in $\textit{candidates}$ so as to provide a constant-time conversion from a vertex to the corresponding candidate.

\begin{enumerate}
\item \Algo{Entrance(\textit{v})} takes as input a vertex $v$ and outputs \texttt{TRUE} if $v$ is 
an entrance candidate, that is, if it satisfies Lemma~\ref{start}, and \texttt{FALSE} otherwise.
\item \Algo{Exit(\textit{v})} takes as input a vertex $v$ and outputs \texttt{TRUE} if $v$ is 
an exit candidate, that is, if it satisfies Lemma~\ref{exit}, and \texttt{FALSE} otherwise.
\item \Algo{InsertEntrance(\textit{v})} takes as input a vertex $v$, inserts it  at the end of \textit{candidates} and labels it as \textit{entrance}. 

\item \Algo{InsertExit(\textit{v})} takes as input a vertex $v$, inserts it  at the end of \textit{candidates} and labels it as \textit{exit}. 

\item \Algo{Head(\textit{candidates})} and \Algo{Tail(\textit{candidates})} return the first and the last element in \textit{candidates}, respectively.

\item \Algo{DeleteTail(\textit{candidates})} deletes the last element in \textit{candidates}.
\item \Algo{Next(\textit{v})} returns the candidate following $v$ in \textit{candidates}.
\end{enumerate}

In addition to the above subroutines, the main algorithm also explicitly makes use of the following arrays.  
\begin{enumerate}
\item The array $\textit{ordD}$ stores the topological order of the vertices.
\item The array $\textit{previousEntrance}$ stores the previous entrance candidate $s$ for each vertex $v$. Formally, $\textit{previousEntrance}[v]=s$ where $s$ is an entrance candidate such that $\textit{ordD}[s] < \textit{ordD}[v]$; and there does not exist another entrance candidate $s'$ such that $\textit{ordD}[s] < \textit{ordD}[s'] < \textit{ordD}[v]$. 
\item The array $\textit{alternativeEntrance}$ is used to reduce the number of $\textit{entrance}-\textit{exit}$ pairs that need to be considered as possible superbubbles. Array $\textit{alternativeEntrance}$ is further detailed in Section~\ref{subsec:validateMark}.

\end{enumerate}

%

Note that subroutine \Algo{ReportSuperBubble} is called for each exit candidate in decreasing order either by algorithm \Algo{SuperBubble} or through a recursive call to identify a nested superbubble. A call to subroutine \Algo{ReportSuperBubble}{(\textit{start, exit})} checks the possible entrance candidates between $\textit{start}$ and $\textit{exit}$, starting with the nearest previous entrance candidate (to \textit{exit}).
This task is  accomplished with the help of subroutine \Algo{ValidateSuperBubble}, explained in the following section, which checks whether a given candidate $\langle s,t \rangle$ is a superbubble or not; if it is not, the algorithm  returns either a ``-1"  which means that no superbubble ends at $t$, or an alternative entrance candidate for a superbubble that could end at $t$.
For the graph in Figure~\ref{figu-graph},  the algorithm detects and reports five superbubbles: $\langle v_8, v_{14} \rangle$, $\langle v_3, v_8\rangle$, $\langle v_5, v_7\rangle$, $\langle v_{11}, v_{12}\rangle$ and  $\langle v_1, v_3\rangle$. Here, both $\langle v_5, v_7\rangle$ and  $\langle v_{11}, v_{12}\rangle$ are nested superbubbles.

\begin{algo}{SuperBubble}{G}
\CALL{TopologicalSort}{G}
\SET{\textit{prevEnt}}{\texttt{NULL}}
\DOFOR{\mbox{each vertex $v$ in topological order}}
\SET{\textit{alternativeEntrance}[v]}{\texttt{NULL}}
\SET{\textit{previousEntrance}[\textit{v}]}{\textit{prevEnt}}
\IF{ \Call{Exit}{v}}
\CALL{InsertExit}{v}
\FI
\IF{ \Call{Entrance}{v}}
\CALL{InsertEntrance}{v}
\SET{\textit{prevEnt}}{\textit{v}}
\FI
\OD
\DOWHILE{\textit{candidates} \mbox{ is not empty}}
\IF{ \Call{Entrance}{\Call{Tail}{\textit{candidates}}}}
\CALL{DeleteTail}{\textit{candidates}}
\ELSE  \CALL{ReportSuperBubble}{\Call{Head}{\textit{candidates}},{\Call{Tail}{\textit{candidates}}}}
\FI
\OD
\end{algo}
\bigskip \bigskip

\begin{algo}{ReportSuperBubble}{\textit{start, exit}}
\COM{Report the superbubble ending at \textit{exit} (if any)}
\IF{ (\textit{start} = \texttt{NULL}) ~|~ (\textit{exit} = \texttt{NULL}) ~|~ (\textit{ordD}[\textit{start}] \geq \textit{ordD}[\textit{exit}])}
\CALL{DeleteTail}{\textit{candidates}}
\RETURN {}
\FI 

\SET{s}{\textit{previousEntrance}[{\textit{exit}}]}
\DOWHILE{ (\textit{ordD}[s] \geq \textit{ordD}[\textit{start}])   }
	\SET{\textit{valid}}{\Call{ValidateSuperBubble}{s,\textit{exit}}}
 	\IF { (\textit{valid} = s) ~|~ (\textit{valid} = \textit{alternativeEntrance}[s]) ~|~ (\textit{valid} = -1)}
		\BREAK 			
 	\FI
 	\SET{\textit{alternativeEntrance}[s]}{\textit{valid}} \label{mark}
 	\SET{s}{\textit{valid}}
 \OD
\CALL{DeleteTail}{\textit{candidates}}  	
\IF{ (\textit{valid} = s) }
   		\CALL {Report}{\langle s, \textit{exit}\rangle} \label{report}
  		\DOWHILE{(\Call{Tail}{\textit{candidates}} \mbox{ is not } s)}
  			\IF{\Call{Exit}{\Call{Tail}{\textit{candidates}}}}
   				\COM{Check for nested superbubbles}
   			 	\CALL{ReportSuperBubble}{\Call{Next}{s},\Call{Tail}{\textit{candidates}}}
   			\ELSE 
   				\CALL{DeleteTail}{\textit{candidates}}
   			\FI
   		\OD
   \FI
\RETURN{}  	
\end{algo}


\begin{rmk}
It is also possible to design the algorithm so as to move forward in topological order instead of backwards.
\end{rmk}

\noindent For graph $G$ in Figure~\ref{figu-graph}, algorithm \Algo{SuperBubble}{($G$)} makes exactly three calls to subroutine \Algo{ReportSuperBubble}:
\begin{enumerate}
\item \Algo{ReportSuperBubble}{$(v_1,v_{14})$}: First, it  checks the exit candidate $v_{14}$ against the nearest previous entrance candidate, i.e.~vertex $v_{13}$. Subroutine \Algo{ValidateSuperBubble}{$(v_{13},v_{14})$} returns $v_8$ as an alternative entrance candidate. The new candidate  is then checked and the superbubble $\langle v_{8}, v_{14} \rangle$ is reported.
\item \Algo{ReportSuperBubble}{$(v_1,v_{8})$}: First, it checks the exit candidate $v_8$ against the nearest previous entrance candidate, i.e.~vertex $v_5$.  Subroutine \Algo{ValidateSuperBubble}{$(v_5,v_8)$} returns $v_3$ as an alternative entrance candidate. The new candidate  is then checked and the superbubble $\langle v_{3}, v_{8} \rangle$ is reported. Additionally, two recursive calls are made: 
\begin{enumerate}
\item \Algo{ReportSuperBubble}{$(v_{11},v_{7})$}:  First, it validates $\langle v_{5}, v_{7} \rangle$ and reports it. Then, it makes a recursive call to subroutine \Algo{ReportSuperBubble}{$(v_{10},v_{10})$} which terminates without reporting any superbubble.
\item \Algo{ReportSuperBubble}{$(v_{11},v_{12})$}:  validates $\langle v_{11}, v_{12} \rangle$ and reports it. 
\end{enumerate}
\item \Algo{ReportSuperBubble}{$(v_1,v_3)$}: validates $\langle v_{1}, v_{3} \rangle$ and reports it.
\end{enumerate}


\section{Validating a Superbubble}\label{sec:validate}

In this section, we describe subroutine \Algo{ValidateSuperBubble}. The ability to validate a candidate superbubble depends on the following result related to the Range Minimum Query problem.

The Range Minimum Query problem, RMQ for short, is to preprocess a given array $A[1\dd n]$ for subsequent queries of the form: ``Given indices $i, j$, what is the minimum value of $A[i\dd j]$?''. The problem has been  studied intensively for decades and several $\langle O(n),O(1)\rangle$-RMQ data structures have been proposed, many of which depend on the equivalence between the Range Minimum Query and the Lowest Common Ancestor problems~\cite{harel1984,Fischer06,Durocher11}. 

In order to check whether a superbubble candidate $\langle s,t\rangle$ is a superbubble or not, we propose to utilise the range min/max  query  problem as follows: 

\begin{itemize}
\item For a given graph $G=(V,E)$ and for each vertex $v \in V$ with topological order $\textit{ordD}[v]$, calculate the topological orderings of the
parent and the child of $v$ that are topologically furthest from $v$.

\begin{align*} 
\mbox{\textit{OutParent}[\textit{ordD}$[v]]$} & =  \mbox{min (\{\textit{ordD}$[u_i] ~|~ (u_i,v) \in E\}$)}, \\
\mbox{\textit{OutChild}[\textit{ordD}$[v]]$} & =  \mbox{max (\{\textit{ordD}$[u_i] ~|~ (v,u_i) \in E\}$)}.
\end{align*}


\item For an integer array $A$ and indices $i$ and $j$ we define  $\Call{RangeMin}{A,i,j}$  and $\Call{RangeMax}{A,i,j}$ to return the  minimum and maximum value of $A[i.. j]$, respectively. 
 
 Then for a given superbubble candidate $\langle s,t\rangle$, where $s$ and $t$ are an entrance and an exit candidate respectively (satisfying Lemmas~\ref{entrance} and~\ref{exit}), if $\langle s,t\rangle$ is a superbubble then the following two conditions are valid
 
\begin{align*} 
\Algo{RangeMin}(\mbox{{\textit{OutParent},\textit{ordD}[s]+1,\textit{ordD}[t]}}) & =  \mbox{\textit{ordD}}[s], \\
\Algo{RangeMax}(\mbox{{\textit{OutChild},\textit{ordD}[s],\textit{ordD}[t]}-1}) & =  \mbox{\textit{ordD}}[t].
\end{align*} 
 
 
\end{itemize}

For example, Figure~\ref{figu-Outtable} represents both  $\textit{OutParent}$ and $\textit{OutChild}$ arrays computed for  graph $G$  in Figure~\ref{figu-graph}. Furthermore, a candidate $\langle v_5, v_8\rangle$ is not a superbubble as  $\Call{RangeMin}{\textit{OutParent},\textit{ordD}[v_5]+1,\textit{ordD}[v_8]} = 3 \neq 6 = \textit{ordD}[v_5]$.

\begin{figure}[!t]
\setlength{\tabcolsep}{4pt}
$$\begin{tabularx}{1\textwidth}{lccccccccccccccccccccccccccc}
 $j$ && 1 & 2 & 3 & 4 & 5 & 6 & 7 & 8 & 9 & 10 &11&12&13&14&15 \\
\hline
  &&$v_1$ & $v_2$ & $v_3$ & $v_{11}$ & $v_{12}$
      & $v_5$ & $v_9$ & $v_6$ & $v_{10}$ & $v_7$& $v_4$& $v_8$ &$v_{13}$&$v_{15}$&$v_{14}$\\
 ${\textit{OutParent}[j]}$ &  &-& 1 & 1 & 3 &  4 & 3 & 6 & 6 & 7 & 8 & 3&5&12&13&12\\
${\textit{OutChild}[j]}$ &  & 3& 3 & 11 & 5 & 12 & 8 & 9 & 10 & 10 & 12 & 12 & 15& 15&15&- \\
\end{tabularx}$$
\caption{\textit{OutParent} and \textit{OutChild} arrays for the graph in Figure~\ref{figu-graph}.}

\label{figu-Outtable}
\end{figure}

It should be clear that after an $\mathcal{O}(n+m)$-time preprocessing, validating a superbubble requires $\mathcal{O}(1)$ time which is the cost for range max/min query. Subroutine $\Call{ValidateSuperBubble}{\textit{startVertex, endVertex}}$ is designed to return an appropriate entrance candidate for a superbubble ending at \textit{endVertex} (if any), as follows.
\bigskip

\begin{algo}{ValidateSuperBubble}{\textit{startVertex, endVertex}}
\SET {\textit{start}}{\textit{ordD}[\textit{startVertex}]}
\SET {\textit{end}}{\textit{ordD}[\textit{endVertex}]}
\SET {\textit{outchild}}{\Call{RangeMax}{\textit{OutChild}, \textit{start}, \textit{end}-1}}
\SET {\textit{outparent}}{\Call{RangeMin}{\textit{OutParent}, \textit{start}+1, \textit{end}}}
\IF {\textit{outchild} \neq \textit{end}}
		 \RETURN{-1}
\FI
\IF {\textit{outparent} = \textit{start}}
		 \RETURN{\textit{startVertex}}
\ELSEIF { \Call{Entrance}{\Call{Vertex}{\textit{outparent}}}}  
        \RETURN{\Call{Vertex}{\textit{outparent}}}

\ELSE

		\RETURN{\textit{previousEntrance}[\Call{Vertex}{\textit{outparent}}]}
\FI  
\end{algo}
\bigskip 

Note that subroutine \Algo{ValidateSuperBubble} utilises subroutine \Algo{Entrance} and  the array \textit{previousEntrance} defined in Section~\ref{sec:find}, as well as subroutine \Algo{Vertex} that takes as input an integer $i$ and outputs vertex $v$ such that $\textit{ordD}[i] = v$.

An important observation is that a subsequent call to subroutine \Algo{ValidateSuperBubble}, for a given entrance candidate, returns alternative entrance candidates in strictly non-decreasing topological order as proved by Lemma~\ref{orderAlternative}.

\begin{lemma} \label{orderAlternative}
Let $t$ be the alternative entrance candidate returned by subroutine \Call{ValidateSuperBubble}{s,e}.
Then for any exit candidate $e'$ such that  $\textit{ordD}[s] < \textit{ordD}[e'] <  \textit{ordD}[e]$, the  order of the alternative entrance candidate $t'$ returned by subroutine \Call{ValidateSuperBubble}{s,e'} will be greater than or equal to the order of $t$.

\end{lemma}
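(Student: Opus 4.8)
The plan is to analyze how the return value of \Call{ValidateSuperBubble}{s,e} depends on $e$, with $s$ held fixed. Recall that the subroutine computes $\textit{outparent} = \Call{RangeMin}{\textit{OutParent}, \textit{ordD}[s]+1, \textit{ordD}[e]}$ and returns either $s$ (when $\textit{outparent} = \textit{ordD}[s]$), or the vertex $w$ at position $\textit{outparent}$ if $w$ is an entrance candidate, or $\textit{previousEntrance}[w]$ otherwise (and $-1$ in the failure case, which we handle separately). The key observation is monotonicity of the range-min: since $\textit{ordD}[e'] < \textit{ordD}[e]$, the query window $[\textit{ordD}[s]+1, \textit{ordD}[e']]$ is a prefix of $[\textit{ordD}[s]+1, \textit{ordD}[e]]$, so $\textit{outparent}' := \Call{RangeMin}{\textit{OutParent}, \textit{ordD}[s]+1, \textit{ordD}[e']} \geq \textit{outparent}$. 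I would first establish this inequality, then push it through the three branches of the subroutine to conclude $\textit{ordD}[t'] \geq \textit{ordD}[t]$.

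First I would dispose of the $-1$ cases and the $e$-returns-$s$ case. If \Call{ValidateSuperBubble}{s,e} returns $-1$ because $\textit{outchild} \neq \textit{end}$, the statement is vacuous for that $e$; similarly if the call on $e'$ returns $-1$ there is nothing to compare, so in the remaining argument I may assume both calls pass the $\textit{outchild}$ test and return genuine entrance candidates. If \Call{ValidateSuperBubble}{s,e} returns $t = s$, then $t' $, being an entrance candidate with $\textit{ordD}[t'] \geq \textit{ordD}[s] = \textit{ordD}[t]$ (this holds because every value the subroutine can return is $\geq \textit{ordD}[s]$: in the entrance-vertex branch $\textit{outparent} \geq \textit{ordD}[s]+1$, and in the previousEntrance branch the returned candidate has order strictly between some earlier exit and $\textit{outparent}$ but still $> \textit{ordD}[s]$ — this last point needs a short argument using the definition of $\textit{previousEntrance}$ and the fact that $s$ itself is an entrance candidate sitting at position $\textit{ordD}[s]$), we are done. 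So the interesting case is $\textit{ordD}[s] < \textit{outparent} \leq \textit{outparent}'$ with both calls in the last two branches.

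Next I would handle the two-branch case by a monotonicity argument on the map $x \mapsto (\text{the returned candidate when } \textit{outparent} = x)$ for $x > \textit{ordD}[s]$. Define $f(x)$ to be $\Call{Vertex}{x}$ if that vertex is an entrance candidate, and $\textit{previousEntrance}[\Call{Vertex}{x}]$ otherwise; then $t = f(\textit{outparent})$ and $t' = f(\textit{outparent}')$ and I want $\textit{ordD}[f(\textit{outparent}')] \geq \textit{ordD}[f(\textit{outparent}))]$. The claim is that $f$ is order-preserving in the weak sense: $x_1 \leq x_2 \implies \textit{ordD}[f(x_1)] \leq \textit{ordD}[f(x_2)]$. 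This is because $f(x) \leq x$ always (either $f(x)$ is the vertex at position $x$, or it is the previous entrance before it, which has strictly smaller order), and moreover $f(x)$ is precisely the largest-order entrance candidate with order $\leq x$: if $\Call{Vertex}{x}$ is an entrance candidate it is that candidate, and otherwise $\textit{previousEntrance}[\Call{Vertex}{x}]$ is by definition the entrance candidate of largest order strictly below $x$, and there is none between. Since "largest-order entrance candidate with order $\leq x$" is obviously a nondecreasing function of $x$, weak monotonicity of $f$ follows, and applying it to $x_1 = \textit{outparent} \leq \textit{outparent}' = x_2$ gives $\textit{ordD}[t] = \textit{ordD}[f(\textit{outparent})] \leq \textit{ordD}[f(\textit{outparent}')] = \textit{ordD}[t']$, as required.

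The main obstacle I anticipate is being careful about the boundary interaction with $s$ itself: $s$ is an entrance candidate occupying position $\textit{ordD}[s]$ in the ordering, and the query window starts at $\textit{ordD}[s]+1$, so I must make sure that "largest-order entrance candidate with order $\leq \textit{outparent}$" never drops below $s$ — i.e., that it is never an entrance candidate with order $< \textit{ordD}[s]$ — since $\textit{outparent} \geq \textit{ordD}[s]+1 > \textit{ordD}[s]$ and $s$ is itself an entrance candidate with order $\textit{ordD}[s] \leq \textit{outparent}$, the max is always $\geq \textit{ordD}[s]$, so this is fine; but it is the kind of off-by-one that the write-up must pin down explicitly. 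A secondary subtlety is confirming that the characterization of $\textit{previousEntrance}[w]$ as "the entrance candidate of largest order strictly less than $\textit{ordD}[w]$" is exactly what is needed and matches the array's definition given in Section~\ref{sec:find}; once that is in hand, the argument is purely order-theoretic and short.
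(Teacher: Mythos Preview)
Your approach is correct and follows the same core idea as the paper's proof: both hinge on the range-min monotonicity observation that shrinking the query window from $[\textit{ordD}[s]+1,\textit{ordD}[e]]$ to $[\textit{ordD}[s]+1,\textit{ordD}[e']]$ can only increase $\textit{outparent}$, whence $\textit{ordD}[t] \le \textit{ordD}[t']$.

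The difference is one of completeness rather than strategy. The paper's proof stops after establishing $\textit{outparent} \le \textit{outparent}'$ and simply asserts ``Therefore $\textit{ordD}(t) \le \textit{ordD}(t')$'', leaving implicit the step you spell out: that the map taking $\textit{outparent}$ to the returned candidate is order-preserving. Your characterisation of that map as ``largest-order entrance candidate with order $\le x$'' is exactly right (using that $s$ itself is an entrance candidate, so the $\textit{outparent} = \textit{ordD}[s]$ branch is subsumed), and makes the monotonicity immediate. Your explicit treatment of the $-1$ and $t=s$ boundary cases is also a genuine improvement in rigour over the paper, which silently assumes the non-degenerate situation. In short: same proof skeleton, but you have filled in the connective tissue the paper omits.
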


\begin{proof}

Recall that the alternative entrance $t$  returned by the subroutine \Call{ValidateSuperBubble}{s,e'} is either a vertex with topological order \textit{outperent}, or the \textit{previousEntrance} of this vertex.

Since $\textit{outparent} = \Call{RangeMin}{\textit{OutParent}, \textit{ordD}[s]+1, \textit{ordD}[e]}$,   $\textit{outparent}' = \Call{RangeMin}{\textit{OutParent}, \textit{ordD}[s]+1, \textit{ordD}[e']}$ and $\textit{ordD}[s] < \textit{ordD}[e'] < \textit{ordD}[e]$, we have $\textit{outparent} \le \textit{outparent}'$. Therefore, $ ordD(t) \leq ordD(t')$.
\qed
\end{proof}

\subsection{Validation and \textit{alternativeEntrance}}\label{subsec:validateMark}
In case the validation of the  candidate pair $(t_0,e)$ fails, subroutine \Call{ValidateSuperBubble}{t_0,e} returns either ``-1" or an alternative candidate $t_1$ which might be an entrance of a superbubble ending at $e$. This alternative candidate $t_1$ is either a vertex $u_1$, if $u_1$ is an entrance candidate,  or the previous entrance candidate of  $u_1$  such that

\begin{align*} 
\mbox{\textit{ordD}$[u_1]$} & = \mbox{\textit{OutParent}[\textit{ordD}$[v_1]]$}\\
& = \Algo{RangeMin}(\mbox{\textit{OutParent},\textit{ordD}$[t_0]+1$,\textit{ordD}$[e]$}),
\end{align*}


\noindent where $v_1$ is some vertex between $t_0$ and $e$ in the topological ordering. 


Suppose $t_1$  is also not a valid entrance of the superbubble ending at $e$. Then there must be a vertex $v_2$, $\textit{ordD}[t_1] < \textit{ordD}[v_2] < \textit{ordD}[t_0]$, with some  parent $u_2$,  such that  $\textit{ordD}[u_2] = \textit{OutParent}[\textit{ordD}[v_2]]$. Then the alternative entrance is some $t_2$, which is either a vertex $u_2$ or its previous entrance and thus  $\textit{ordD}[t_2] < \textit{ordD}[t_1]$.  A series of such failed validations produces a sequence $t_1, t_2,...$ of failed alternative entrance candidates.


An important observation here is that any entrance $t_i$,  for $i \geq 1$,  from such a sequence   is an invalid entrance not only  for the superbubble ending at $e$  but also for all those ending at any other exit vertex $e^\prime$ such that $\textit{ordD}[t_{i-1}] < \textit{ordD}[e^\prime]< \textit{ordD}[e]$  and $t_i$ =  \Algo{ValidateSuperBubble}{($t_{i-1},e^\prime$)}. This is the case because the vertex $v_i$ which causes the alternative entrance $t_{i}$ to fail is such that $\textit{ordD}[t_{i}] < \textit{ordD}[v_i] < \textit{ordD}[t_{i-1}]$ for $i  \geq 1$. Therefore, $v_i$ does not depend on the exit $e$ but rather on  the previous failed candidate entrance.


This is where array \textit{alternativeEntrance} plays an important role. Storing $\textit{alternativeEntrance}[t_{i-1}] = t_{i}$ for $i \geq 1$ enables us to skip this sequence at a later stage if $t_i$ is returned by subroutine \Algo{ValidateSuperBubble}{$(t_{i-1},e^\prime)$}. 



\section{Algorithm Analysis}

In this section, we analyse the correctness and the running time  of the proposed algorithm \Algo{SuperBubble}.
For simplicity, in the following lemma we will slightly abuse the terminology and refer to $\langle s, t\rangle$ as a \textit{superbubble} if it satisfies the first three conditions given in Definition~\ref{def:sup}, and as \textit{minimal superbubble} if it also satisfies the last condition in the same definition.

\begin{lemma} \label{minimal}
For a given exit candidate $e$, let $s$ be 
the entrance candidate such that superbubble $\langle s,e \rangle$ is reported   by subroutine \Call{ValidateSuperBubble}{s,e}. 
Then $\langle s,e \rangle$ is a minimal superbubble.
\end{lemma}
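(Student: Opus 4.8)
The plan is to show two things: first, that the pair $\langle s, e\rangle$ returned by a successful call to \Call{ValidateSuperBubble}{s,e} satisfies the first three conditions of Definition~\ref{def:sup} (reachability, matching, acyclicity), so that it is a superbubble in the relaxed sense; and second, that no entrance candidate $s'$ with $\textit{ordD}[s] < \textit{ordD}[s'] < \textit{ordD}[e]$ together with $s$'s exit $e$ forms a smaller superbubble, which gives minimality. Acyclicity is immediate since $G$ itself is acyclic, so any induced subgraph is acyclic. The work is in the matching condition and in minimality.

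For the matching condition, I would argue directly from the two range queries that succeeded. Let $U = \{\,x : \textit{ordD}[s] \le \textit{ordD}[x] \le \textit{ordD}[e]\,\}$; by Proposition~\ref{ts2} and the fact that $T$ contains an $s$-to-$e$ path (guaranteed because \Call{RangeMax}{\textit{OutChild},\textit{ordD}[s],\textit{ordD}[e]-1}$=\textit{ordD}[e]$ forces a child chain, or more directly because $e$ is reachable from $s$ within this interval), every vertex of $U$ is reachable from $s$; and by the dual argument using \textit{OutParent}, every vertex of $U$ reaches $e$. I then show the two range conditions guarantee closure: \Call{RangeMin}{\textit{OutParent},\textit{ordD}[s]+1,\textit{ordD}[e]}$=\textit{ordD}[s]$ means no vertex strictly inside the interval has a parent with order $<\textit{ordD}[s]$, so no edge enters $U\setminus\{s\}$ from outside; symmetrically \Call{RangeMax}{\textit{OutChild},\ldots}$=\textit{ordD}[e]$ means no vertex of $U\setminus\{e\}$ has a child of order $>\textit{ordD}[e]$, so no edge leaves $U\setminus\{e\}$ to outside. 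Combined with reachability this shows the set reachable from $s$ avoiding $e$ equals the set reaching $e$ avoiding $s$ equals $U\setminus(\text{something})$ — here I would invoke Lemma~\ref{ts}-style reasoning to pin down exactly that this common set is $U$, hence $\langle s,e\rangle$ satisfies matching.

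For minimality, suppose some $s'$ with $\textit{ordD}[s] < \textit{ordD}[s'] < \textit{ordD}[e]$ also forms a superbubble $\langle s', e\rangle$; I must derive a contradiction with the fact that the algorithm \emph{reported} $\langle s,e\rangle$ rather than $\langle s',e\rangle$. The key is the loop structure in \Algo{ReportSuperBubble}: it starts $s$ at $\textit{previousEntrance}[e]$ (the entrance candidate nearest to $e$) and walks backward through the returned alternative entrances; it stops and reports only when \Call{ValidateSuperBubble}{s,e}$=s$. If $\langle s',e\rangle$ were a superbubble with $s'$ closer to $e$, then $s'$ is itself an entrance candidate (by Lemma~\ref{start}), and I need to show the backward walk would have halted at $s'$ (or at some entrance between $s'$ and $e$) before ever reaching $s$. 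This uses Lemma~\ref{orderAlternative} (the alternative entrances are non-decreasing as the exit decreases, so the walk cannot "overshoot" past a genuine entrance) together with the correctness of the range test: if $s'$ were a valid entrance then \Call{ValidateSuperBubble}{s'',e} would return $s'$ for the candidate $s''$ immediately following it in the walk, triggering the \texttt{BREAK}. Hence the first entrance at which the walk halts is the one \emph{closest} to $e$ among all valid entrances, which is exactly the minimality condition.

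The main obstacle I anticipate is the minimality half, specifically making rigorous the claim that the backward walk through \textit{alternativeEntrance} values visits (or correctly skips past, via the cached \textit{alternativeEntrance} array) every relevant entrance candidate between $s$ and $e$ without skipping a valid one. This requires carefully combining Lemma~\ref{orderAlternative} with the semantics of the \textit{alternativeEntrance} caching described in Section~\ref{subsec:validateMark} — in particular arguing that a cached skip $\textit{alternativeEntrance}[t_{i-1}]=t_i$ is only ever taken when $t_i$ is genuinely invalid as an entrance for \emph{every} exit in the relevant range, so no valid smaller superbubble is bypassed. The reachability and matching parts, by contrast, are a fairly direct translation of the two \Call{RangeMin}/\Call{RangeMax} equalities into statements about edges crossing the boundary of $U$, using Propositions~\ref{ts1} and~\ref{ts2} and Lemma~\ref{ts}.
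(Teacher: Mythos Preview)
Your plan misreads the minimality condition of Definition~\ref{def:sup}. Minimality fixes the \emph{entrance} $s$ and asks whether some other vertex $e'\in U$ (with $\textit{ordD}[s]<\textit{ordD}[e']<\textit{ordD}[e]$) could serve as an \emph{exit}, i.e.\ whether $\langle s,e'\rangle$ also satisfies reachability, matching and acyclicity. You instead fix the exit $e$ and vary the entrance, arguing that no $s'$ with $\textit{ordD}[s]<\textit{ordD}[s']<\textit{ordD}[e]$ gives a superbubble $\langle s',e\rangle$. That is a different statement; establishing it does not, by itself, rule out a smaller superbubble $\langle s,e'\rangle$ sharing the same entrance. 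Relatedly, the first half of your plan (reachability, matching, acyclicity of $\langle s,e\rangle$) is not the content of this lemma at all: by the terminology note immediately preceding the lemma, the hypothesis already grants that $\langle s,e\rangle$ is a ``superbubble'' in the relaxed sense, and only minimality is to be proved here (the three conditions are handled separately in Lemma~\ref{prop:me}).

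The paper's argument goes in the correct direction: assume some $e'$ with $\textit{ordD}[s]<\textit{ordD}[e']<\textit{ordD}[e]$ makes $\langle s,e'\rangle$ a (relaxed) superbubble, and show that then $\langle e',e\rangle$ is itself a (relaxed) superbubble --- either trivially when $\textit{ordD}[e]=\textit{ordD}[e']+1$ via Lemma~\ref{exit}, or by checking reachability/matching/acyclicity for the interval $[e',e]$ using that both $\langle s,e\rangle$ and $\langle s,e'\rangle$ are superbubbles. Since entrance candidates are tested in reverse topological order starting from $\textit{previousEntrance}[e]$, the pair $\langle e',e\rangle$ would have been validated and reported before the walk ever reached $s$, a contradiction. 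Your ``closest entrance'' argument is in fact the final step of this chain, but the crucial missing idea is the reduction from an alternative exit $e'$ for $s$ to an alternative entrance (namely $e'$ itself) for $e$. Without that reduction your plan does not address the property the lemma actually asserts. The machinery you worry about (Lemma~\ref{orderAlternative}, the \textit{alternativeEntrance} caching) is not needed here; it is used later for the running-time analysis and for Lemma~\ref{alternativeEntrance}, not for minimality.
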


\begin{proof}
By contradiction, let $e'$ be an exit candidate such that $\langle s,e' \rangle$ is also a superbubble and $\textit{ordD}[s] < \textit{ordD}[e'] < \textit{ordD}[e]$. Then, either $\textit{ordD}[e] = \textit{ordD}[e']+1$ or there is at least one vertex $v$ such that $\textit{ordD}[e']< \textit{ordD}[v]<\textit{ordD}[e]$.

In the first case, $\textit{ordD}[e] = \textit{ordD}[e']+1$ implies that $e$ is the only child of $e'$ and $e'$ is the only parent of $e$, which, 
by Lemma~\ref{exit} makes $\langle e',e \rangle$ a superbubble.


In the second case, where there is at least one vertex $v$ such that $\textit{ordD}[e']< \textit{ordD}[v]<\textit{ordD}[e]$,  we also argue that $\langle e', e \rangle$ must be a superbubble. Indeed, $\langle e', e \rangle$ satisfies the following conditions:
\begin{enumerate}
\item \textbf{Reachability:} Since $\langle s, e \rangle$ is a superbubble, $e$ is reachable from $s$; since $\langle s, e' \rangle$ is also assumed to be a  superbubble, any path from $s$ to $e$ must go through $e'$, therefore $e$ is reachable from $e'$.
\item\textbf{Matching:} The only vertices reachable from $e'$ without going through $e$ are those whose topological order is between  $ordD(e')$ and $ordD(e)$. Indeed, since $\langle s, e \rangle$ and $\langle s, e' \rangle$ are superbubbles, all these vertices are reachable from $s$ through $e'$, and no vertices with topological order greater than $ordD(e)$ are reachable from $e'$ without going through $e$.
Similarly,  there are no edges between vertices with topological order less than $ordD(e')$ and those with the topological order between  $ordD(e')$ and $ordD(e)$. Therefore, the only vertices from which $e$ is reachable without going through $e'$ are those whose topological order is between  $ordD(e')$ and $ordD(e)$.
\item \textbf{Acyclicity:} Since $\langle s, e \rangle$ is a superbubble it is acyclic; since $\langle e', e \rangle$ is a subgraph of $\langle s, e \rangle$, it is also acyclic.
\end{enumerate}


In both cases,  since for each exit candidate the entrance  candidates are checked in reverse topological order, subroutine \Algo{ValidateSuperBubble}  would have been called on  $\langle e', e \rangle$ first, and would have reported $\langle e', e \rangle$ instead of $\langle s, e \rangle$. Therefore, $\langle s, e \rangle$ is a minimal superbubble. \qed
\end{proof}

\begin{lemma} \label{prop:me}
For the given entrance and exit candidates $s$ and $e$, respectively, subroutine \Algo{ValidateSuperBubble} reports $\langle s, t\rangle$, if and only if, $\langle s, t\rangle$ is a superbubble. 
\end{lemma}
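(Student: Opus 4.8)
The plan is to reduce the statement to a pair of range-query equalities and then prove each implication. Observe first that the call \Algo{ValidateSuperBubble}$(s,e)$ returns its first argument $s$ --- the event we mean by ``\Algo{ValidateSuperBubble} reporting $\langle s,e\rangle$'', since it is exactly the event that triggers the \Algo{Report} call inside \Algo{ReportSuperBubble} --- if and only if $\textit{outchild}=\textit{ordD}[e]$ and $\textit{outparent}=\textit{ordD}[s]$; unwinding the definitions of these two quantities in the subroutine, this is the conjunction
\begin{align*}
\Algo{RangeMax}(\textit{OutChild},\textit{ordD}[s],\textit{ordD}[e]-1) &= \textit{ordD}[e],\\
\Algo{RangeMin}(\textit{OutParent},\textit{ordD}[s]+1,\textit{ordD}[e]) &= \textit{ordD}[s].
\end{align*}
Both query ranges are non-empty, since \Algo{ValidateSuperBubble} is invoked only with $\textit{ordD}[s]<\textit{ordD}[e]$. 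Hence it suffices to show that this pair of equalities holds if and only if $\langle s,e\rangle$ satisfies the reachability, matching and acyclicity conditions of Definition~\ref{def:sup}; I would prove the two implications separately.

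For ``superbubble $\Rightarrow$ equalities'' --- the direction already asserted informally in Section~\ref{sec:validate} --- I would let $U$ be the vertex set of the superbubble $\langle s,e\rangle$ and invoke Lemma~\ref{ts} (whose conclusion already follows from the matching condition) to get $U=\{v:\textit{ordD}[s]\le\textit{ordD}[v]\le\textit{ordD}[e]\}$. For any index $k$ with $\textit{ordD}[s]<k\le\textit{ordD}[e]$ the vertex $\Algo{Vertex}(k)$ lies in $U\backslash\{s\}$, so by the degree-preserving characterisation of a superbubble stated right after Definition~\ref{def:sup} (no edge enters $U\backslash\{s\}$ from outside $U$) each of its parents lies in $U$, hence has order $\ge\textit{ordD}[s]$ by Lemma~\ref{ts}; this gives $\textit{OutParent}[k]\ge\textit{ordD}[s]$, and a child of $s$ --- which exists and lies in $U$ --- is a witness giving equality. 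The \textit{OutChild} equality follows symmetrically from the out-degree half of the same characterisation together with any parent of $e$.

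For ``equalities $\Rightarrow$ superbubble'', assume the two equalities and put $U:=\{v:\textit{ordD}[s]\le\textit{ordD}[v]\le\textit{ordD}[e]\}$. I would establish, in order: (A) every vertex reachable from $s$ without passing through $e$ lies in $U$ --- walking along such a path, all its vertices have order $\ge\textit{ordD}[s]$ by Proposition~\ref{ts1}, and the \textit{OutChild} equality forbids the first edge that would step strictly above $\textit{ordD}[e]$, since the tail of that edge would be a vertex of order in $[\textit{ordD}[s],\textit{ordD}[e]-1]$ and hence have every child of order $\le\textit{ordD}[e]$; (B) conversely every $v\in U$ is reachable from $s$, by induction on $\textit{ordD}[v]-\textit{ordD}[s]$: a non-source $v\in U$ has a parent whose order, by the \textit{OutParent} equality applied at index $\textit{ordD}[v]$, lies strictly between $\textit{ordD}[s]$ and $\textit{ordD}[v]$ and therefore in $U$, and prepending the corresponding edge to an inductively obtained $s$-to-parent path keeps all vertex orders $\le\textit{ordD}[v]\le\textit{ordD}[e]$, so the path avoids $e$ as an interior vertex; taking $v=e$ gives \textbf{reachability}; (C) the mirror images of (A) and (B), with the roles of $s$ and $e$, of parents and children, and of \textit{OutParent} and \textit{OutChild} interchanged, showing that $U$ is also exactly the set of vertices from which $e$ is reachable without passing through $s$. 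Steps (A)--(C) yield the \textbf{matching} condition with witness set $U$, and \textbf{acyclicity} is immediate because $G$ is acyclic, so the subgraph induced by $U$ is acyclic; hence $\langle s,e\rangle$ is a superbubble.

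I expect the direction ``equalities $\Rightarrow$ superbubble'', and within it step (A), to be the main obstacle: one has to argue carefully that two purely \emph{local} constraints --- ``no child of an in-range vertex lies above $\textit{ordD}[e]$'' and ``no parent of an in-range vertex lies below $\textit{ordD}[s]$'' --- genuinely pin down the \emph{global} reachability sets, and to handle the degenerate cases ($\textit{ordD}[e]=\textit{ordD}[s]+1$; $s$ the unique source; $e$ the unique sink) and the precise reading of ``without passing through'' at the endpoints of a path. These checks are routine but must be spelled out. Finally, I note that the minimality clause of Definition~\ref{def:sup} is deliberately excluded here, consistent with the terminological convention introduced before Lemma~\ref{minimal}; it is supplied separately by Lemma~\ref{minimal}.
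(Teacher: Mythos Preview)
Your proposal is correct and follows essentially the same route as the paper: reduce the claim to the two range-query equalities, use Lemma~\ref{ts} for the ``superbubble $\Rightarrow$ equalities'' direction, and verify reachability, matching and acyclicity for the converse (with minimality deferred to Lemma~\ref{minimal}). Your argument is in fact more carefully spelled out than the paper's --- the paper proves reachability by a brief contradiction and dismisses matching with ``Similarly'', whereas your steps (A)--(C) give an explicit inductive construction --- but the underlying strategy is the same.
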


\begin{proof}
We prove the lemma by showing that if $\langle s, t\rangle$ is a superbubble then subroutine \Algo{ValidateSuperBubble} reports it, and if \Algo{ValidateSuperBubble} reports $\langle s, t\rangle$ then $\langle s, t\rangle$ is a superbubble.
\begin{enumerate}
\item
We start by showing that  if $\langle s, t\rangle$ is a superbubble then subroutine \Algo{ValidateSuperBubble} reports it. Indeed, by Lemma~\ref{ts}, all the vertices with topological orderings between $s$ and $t$ belong to the superbubble $\langle s, t\rangle$. Therefore, the minimum \textit{OutParent} is $s$ and the maximum \textit{OutChild} is $t$ and thus subroutine \Algo{ValidateSuperBubble} reports $\langle s, t\rangle$.
\item
We next show that if subroutine \Algo{ValidateSuperBubble} reports $\langle s, t\rangle$ then $\langle s, t\rangle$ is a superbubble. Let $\textit{start}$ and $\textit{end}$ be two integers, such that $\textit{ordD}[s] = \textit{start}$ and $\textit{ordD}[t] = \textit{end}$.
The graph $G$ as defined, has a single source $r$ and a single sink $r'$; this implies that any vertex $v\in V$ is reachable from $r$ and, at the same time, can reach $r'$. This is also true for $s$, $t$ and for any vertex $v$ such that $\textit{ordD}[s] < \textit{ordD}[v] <\textit{ordD}[t]$.  

First, we  show that $t$ is {\bf reachable} from $s$. Recall that $t$ is an exit candidate, so, it has a parent $p$ with out-degree 1. Assume that $t$ is not reachable from $s$, then there must be a path from $r \rightsquigarrow t$  which does not involve $s$. This implies that either  $\textit{OutParent}[\textit{end}] < \textit{start}$, or there exists a vertex $v$ such that $\textit{start}<\textit{ordD}[v]<\textit{end}$, $\textit{OutParent}[v] <\textit{start}$ and there exists a path $r\rightsquigarrow v \rightsquigarrow t$, which is a contradiction.

Similarly, we can show that every vertex $v$ such that   $\textit{start} < \textit{ordD}[v] <\textit{end}$ satisfies the {\bf matching} criterion of the superbubble.   

The {\bf acyclicity} criterion is guaranteed by the acyclicity of $G$ and the {\bf minimality} is satisfied by the design of subroutine \Algo{ReportSuperBubble} which assigns each exit of a superbubble to the nearest entrance, and by the correctness of Lemma~\ref{minimal}.\qed 
\end{enumerate}

\end{proof}

\begin{lemma} \label{alternativeEntrance}
For a given exit candidate $e$, let $t$ be the alternative entrance candidate returned by subroutine \Call{ValidateSuperBubble}{s,e}. Then any entrance candidate between $t$ and $e$ cannot be a valid entrance for the superbubble ending at $e$.
\end{lemma}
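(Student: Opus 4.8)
The plan is to reduce the claim (reading ``between $t$ and $e$'' as ``of topological order strictly between $\textit{ordD}[t]$ and $\textit{ordD}[e]$'') to a computation with the array $\textit{OutParent}$. As established in the discussion preceding \Algo{ValidateSuperBubble}, if $\langle s',e\rangle$ is a superbubble then necessarily $\Call{RangeMin}{\textit{OutParent},\textit{ordD}[s']+1,\textit{ordD}[e]}=\textit{ordD}[s']$. Hence it suffices to show that for every entrance candidate $s'$ with $\textit{ordD}[t]<\textit{ordD}[s']<\textit{ordD}[e]$ we have $\Call{RangeMin}{\textit{OutParent},\textit{ordD}[s']+1,\textit{ordD}[e]}\neq\textit{ordD}[s']$; the case $s'=s$ is already immediate, since \Algo{ValidateSuperBubble} returned $t\neq s$, so $\langle s,e\rangle$ is not a superbubble.

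First I would read off what the returned $t$ encodes. The vertex $x$ of order $\textit{ordD}[s]+1$ has all of its parents of order at most $\textit{ordD}[s]$, so $\textit{OutParent}[\textit{ordD}[s]+1]\le\textit{ordD}[s]$, and therefore $\textit{outparent}=\Call{RangeMin}{\textit{OutParent},\textit{ordD}[s]+1,\textit{ordD}[e]}\le\textit{ordD}[s]$; since $t$ is returned as an alternative --- neither \textit{startVertex} nor $-1$ --- we in fact have $\textit{outparent}<\textit{ordD}[s]$. Let $u$ be the vertex with $\textit{ordD}[u]=\textit{outparent}$. By the code of \Algo{ValidateSuperBubble}, $t=u$ when $u$ is an entrance candidate and $t=\textit{previousEntrance}[u]$ otherwise; in both cases no entrance candidate has order in $(\textit{ordD}[t],\textit{ordD}[u]]$, so every entrance candidate $s'$ with $\textit{ordD}[s']>\textit{ordD}[t]$ satisfies $\textit{ordD}[s']>\textit{outparent}$. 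Also $\textit{ordD}[t]\le\textit{ordD}[u]<\textit{ordD}[s]$, and some vertex $v$ with $\textit{ordD}[v]\in[\textit{ordD}[s]+1,\textit{ordD}[e]]$ attains the minimum, i.e.\ $u$ is the smallest-order parent of $v$, so $\textit{ordD}[v]\ge\textit{ordD}[s]+1$.

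The entrance candidates with $\textit{ordD}[t]<\textit{ordD}[s']<\textit{ordD}[s]$ now fall out at once: there $\textit{ordD}[v]\ge\textit{ordD}[s]+1>\textit{ordD}[s']$, so $v$ lies in the range $[\textit{ordD}[s']+1,\textit{ordD}[e]]$ and hence $\Call{RangeMin}{\textit{OutParent},\textit{ordD}[s']+1,\textit{ordD}[e]}\le\textit{OutParent}[\textit{ordD}[v]]=\textit{outparent}<\textit{ordD}[s']$, which is strictly below $\textit{ordD}[s']$, as required. Together with the $s'=s$ case this disposes of every entrance candidate of order in $(\textit{ordD}[t],\textit{ordD}[s]]$; the only ones left are those with $\textit{ordD}[s]<\textit{ordD}[s']<\textit{ordD}[e]$.

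This last range is where the real work lies, because the witness $v$ need no longer overshoot $s'$. The approach I would take is by contradiction, assuming $\langle s',e\rangle$ is a superbubble and combining the single-sink property of $G$ with Lemma~\ref{ts}. Since \Algo{ValidateSuperBubble} did not return $-1$, no vertex of order in $[\textit{ordD}[s],\textit{ordD}[e]-1]$ has a child of order above $\textit{ordD}[e]$, so following any path from $v$ towards the unique sink forces it through $e$; hence $v$ reaches $e$. On a path $v\rightsquigarrow e$, let $a\to b$ be its first edge with $\textit{ordD}[a]\le\textit{ordD}[s']<\textit{ordD}[b]$. By Lemma~\ref{ts}, $b$ is interior to $\langle s',e\rangle$ or equals $e$, and the remaining stretch $b\rightsquigarrow e$ stays at orders above $\textit{ordD}[s']$ and so avoids $s'$; thus if $a\neq s'$ then $e$ is reachable from $a$ without passing through $s'$, forcing $a$ into $\langle s',e\rangle$ even though $\textit{ordD}[a]<\textit{ordD}[s']$, which contradicts Lemma~\ref{ts}. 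The only way out is that every $v\rightsquigarrow e$ path enters $\langle s',e\rangle$ precisely at $s'$, i.e.\ $s'$ separates $v$ from $e$, and excluding this is the genuinely delicate step. I would close it by pushing the same reasoning back through the edge $u\to v$ and the ancestors of $v$ (all of order $<\textit{ordD}[s']$, being parents of vertices of order $\le\textit{ordD}[s']$), so that $s'$ would be forced to cut the entire back-cone of $v$ off from $e$, and then deriving a contradiction from $s'$ being an entrance candidate together with Proposition~\ref{ts2} applied to the DFS tree $T$ (which, through the path from $s'$ to $e$ it must contain, puts every vertex of order strictly between $\textit{ordD}[s']$ and $\textit{ordD}[e]$ on a route out of $s'$); alternatively, one can avoid a fresh graph argument by using that within the algorithm $s$ is reached along the strictly decreasing chain $\textit{previousEntrance}[e]=s_0>s_1>\cdots$ with $s_{k+1}=\Algo{ValidateSuperBubble}(s_k,e)$, so that every entrance candidate of order in $[\textit{ordD}[s],\textit{ordD}[e])$ has already been discarded at an earlier link --- those that appear as some $s_k$ by Lemma~\ref{prop:me}, and those skipped between $s_{k+1}$ and $s_k$ by this very lemma applied inductively at $s_k$ (where, as in the preceding paragraph, the witness for $s_k$ overshoots them). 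Pinning down one of these two routes --- in particular, why the witness cannot sit entirely behind $s'$ --- is the step I expect to take real effort; the rest is routine bookkeeping with the $\textit{OutParent}$ array and Lemmas~\ref{ts} and~\ref{prop:me}.
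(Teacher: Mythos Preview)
Your treatment of entrance candidates $s'$ with $\textit{ordD}[t]<\textit{ordD}[s']\le\textit{ordD}[s]$ is correct and is exactly the paper's argument: the witness $v\in[\textit{ordD}[s]+1,\textit{ordD}[e]]$ with $\textit{OutParent}[\textit{ordD}[v]]=\textit{outparent}<\textit{ordD}[s']$ already lies in the larger range $[\textit{ordD}[s']+1,\textit{ordD}[e]]$, forcing $\Call{RangeMin}{\textit{OutParent},\textit{ordD}[s']+1,\textit{ordD}[e]}<\textit{ordD}[s']$.

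The real gap is the range $\textit{ordD}[s]<\textit{ordD}[s']<\textit{ordD}[e]$, and your first proposed route---a purely graph-theoretic contradiction via paths, Lemma~\ref{ts} and Proposition~\ref{ts2}---cannot work, because the lemma is \emph{false} as a standalone statement about \Algo{ValidateSuperBubble}. Take vertices $1,2,2',3,4,5,6$ with edges $1\!\to\!2$, $1\!\to\!3$, $2\!\to\!2'$, $2'\!\to\!4$, $3\!\to\!4$, $4\!\to\!5$, $4\!\to\!6$, $5\!\to\!6$. Then $\langle 4,6\rangle$ is a superbubble, $2$ and $4$ are entrance candidates, $6$ is an exit candidate, and a direct check gives that $\Call{ValidateSuperBubble}{2,6}$ returns $t=1$ (the \textit{outparent} is realised at vertex~$3$, whose furthest parent is~$1$). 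Yet $4$ sits strictly between $1$ and $6$. No amount of path-pushing produces a contradiction here, because there is none; your ``genuinely delicate step'' is in fact impossible.

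What rescues the lemma is precisely your second alternative: in the algorithm, $\Call{ValidateSuperBubble}{s,e}$ is only ever called after $s$ is reached along the chain $s_0=\textit{previousEntrance}[e]$, $s_{k+1}=\Call{ValidateSuperBubble}{s_k,e}$, and (by the same range-minimum monotonicity you use) this chain cannot jump over a genuine entrance $s'$ of a superbubble ending at $e$. The paper takes exactly this route, dispatching the case $\textit{ordD}[s]<\textit{ordD}[s']<\textit{ordD}[e]$ in a single sentence by saying such an $s'$ ``would have already been reported''. So drop the graph-theoretic attempt and develop the inductive/algorithmic argument; that is both necessary and sufficient.
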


\begin{proof}
By contradiction, assume that $s'$ is an entrance candidate between $t$ and $e$ such that $\langle s',e \rangle$ is a superbubble. 
If $s'$ had been between $s$ and $e$, it would have already been reported, as \Algo{SuperBubble} checks entrance candidates  in reverse topological order starting from $e$.  
Therefore, $s'$ is between $t$ and $s$, such that $\textit{ordD}[t] < \textit{ordD}[s'] < \textit{ordD}[s] < \textit{ordD}[e]$. Let $\textit{outparent} = \Call{RangeMin}{\textit{OutParent}, \textit{ordD}[s]+1, \textit{ordD}[e]}$. Then, vertex at \textit{outparent} is between $t$ and $s'$, otherwise subroutine \Call{ValidateSuperBubble}{s,e} would have returned $s'$ (instead of $t$). Therefore, $\textit{ordD}[t] \leq \textit{outparent} < \textit{ordD}[s']$.

Let $\textit{outparent}' = \Call{RangeMin}{\textit{OutParent}, \textit{ordD}[s']+1, \textit{ordD}[e]}$. Then $\textit{outparent}' \leq \textit{outparent}$. This implies that 
$\textit{outparent}' \leq \textit{outparent} < \textit{ordD}[s']$. However, for $\langle s', e \rangle$ to be a valid superbubble, $\textit{outparent}'$ should have been equal to $\textit{ordD}[s']$. Hence, the assumption is wrong and thus, it is  proved that there cannot be an entrance candidate, between $t$ and $e$, which is a valid entrance for the superbubble ending at $e$. \qed
\end{proof}

\begin{lemma} \label{resetMark}
For the given   entrance and exit candidates $s$ and $e_1$, respectively, let $\textit{alternativeEntrance}[s]$ be set to $t_1$ which later gets reset to $t_2$ such that $t_2 \ne t_1$, while considering $s$ with another exit candidate $e_2$. Then no entrance candidate between $s$ and $e_2$ can reset $\textit{alternativeEntrance}[s]$ to $t_1$ again.
\end{lemma}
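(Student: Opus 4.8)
The plan is to show that the value held by $\textit{alternativeEntrance}[s]$ can only move forward in topological order during the execution of \Algo{SuperBubble}, so that once it has left the value $t_1$ it can never return to it. The argument rests on two structural facts about the backward phase: (i) $\textit{alternativeEntrance}[s]$ is changed only by the single assignment inside \Algo{ReportSuperBubble} that sets it to $\textit{valid}=\Call{ValidateSuperBubble}{s,\textit{exit}}$, where $\textit{exit}$ is the exit candidate currently being processed, $s$ is the current value of that routine's loop variable, and at the moment of this assignment $\textit{ordD}[s]<\textit{ordD}[\textit{exit}]$; and (ii) exit candidates are supplied to \Algo{ReportSuperBubble} in strictly decreasing topological order throughout that phase.

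First I would establish (ii). At both of its call sites the exit argument of \Algo{ReportSuperBubble} is $\Call{Tail}{\textit{candidates}}$, and during the backward phase $\textit{candidates}$ is never extended and is shortened only from its tail (directly by \Algo{DeleteTail}, or through the recursive calls, which again shorten only from the tail). Hence each exit candidate is processed exactly once, at the moment when it is the topologically largest candidate still present; consequently every exit candidate processed after $e_1$, and likewise every one processed after $e_2$, has strictly smaller $\textit{ordD}$. Fact (i) follows from the observation that in \Algo{ReportSuperBubble} the loop variable starts at $\textit{previousEntrance}[\textit{exit}]$ and is thereafter replaced only by alternative entrances returned by \Algo{ValidateSuperBubble}, every one of which is topologically strictly before $\textit{exit}$.

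Next I would invoke Lemma~\ref{orderAlternative}. Write $t_1=\Call{ValidateSuperBubble}{s,e_1}$ and $t_2=\Call{ValidateSuperBubble}{s,e_2}$. Facts (i) and (ii) give $\textit{ordD}[s]<\textit{ordD}[e_2]<\textit{ordD}[e_1]$, so Lemma~\ref{orderAlternative}, with $e_1$ in the role of $e$ and $e_2$ in the role of $e'$, gives $\textit{ordD}[t_2]\ge\textit{ordD}[t_1]$; since $t_2\neq t_1$ and $\textit{ordD}$ is injective, in fact $\textit{ordD}[t_2]>\textit{ordD}[t_1]$. Now take any assignment to $\textit{alternativeEntrance}[s]$ performed after this reset to $t_2$; it occurs during the processing of some exit candidate $e_3$. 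If $e_3=e_2$ the value written is again $\Call{ValidateSuperBubble}{s,e_2}=t_2\neq t_1$. Otherwise $e_3$ is processed after $e_2$, so by (ii) $\textit{ordD}[e_3]<\textit{ordD}[e_2]$, while by (i) $\textit{ordD}[s]<\textit{ordD}[e_3]$; since the value is genuinely written rather than a break occurring first, $t_3=\Call{ValidateSuperBubble}{s,e_3}$ is a bona fide alternative entrance, and applying Lemma~\ref{orderAlternative} once more (now with $e_2$ as $e$ and $e_3$ as $e'$) gives $\textit{ordD}[t_3]\ge\textit{ordD}[t_2]>\textit{ordD}[t_1]$, so again $t_3\neq t_1$. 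A straightforward induction on the number of subsequent writes then shows that every value ever placed into $\textit{alternativeEntrance}[s]$ after the reset differs from $t_1$; in particular, whichever entrance candidate between $s$ and $e_2$ might later bring $s$ back into the loop of \Algo{ReportSuperBubble}, it cannot reset $\textit{alternativeEntrance}[s]$ to $t_1$.

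The step I expect to be the main obstacle is the careful verification of (ii) — that the stream of exit candidates fed to \Algo{ReportSuperBubble}, \emph{including those produced by the recursive calls for nested superbubbles}, is genuinely monotone decreasing in $\textit{ordD}$ — together with the bookkeeping behind (i), namely that every assignment to $\textit{alternativeEntrance}[s]$ indeed happens with the loop variable equal to $s$ and with $\textit{ordD}[s]<\textit{ordD}[\textit{exit}]$. Once these facts are pinned down, the monotonicity of \Algo{ValidateSuperBubble} in its exit argument recorded in Lemma~\ref{orderAlternative} makes the remainder of the argument immediate.
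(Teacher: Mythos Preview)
Your proof is correct and follows essentially the same route as the paper: both arguments reduce the claim to the monotonicity statement of Lemma~\ref{orderAlternative}, deducing $\textit{ordD}[t_1]<\textit{ordD}[t_2]\le\textit{ordD}[t_3]$ for any subsequent exit $e_3$ with $\textit{ordD}[s]<\textit{ordD}[e_3]<\textit{ordD}[e_2]$, and hence $t_3\neq t_1$. The only difference is that you spell out and verify the operational facts (i) and (ii) about how \Algo{ReportSuperBubble} feeds exit candidates in decreasing order and when the assignment to $\textit{alternativeEntrance}[s]$ actually fires, whereas the paper's three-line proof takes these for granted; your extra care is warranted but does not change the underlying idea.
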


\begin{proof}
Let $e_3$  be an exit candidate between $s$ and $e_2$ such that subroutine \Call{ValidateSuperBubble}{s,e_3} returns $t_3$. Then by Lemma~\ref{orderAlternative}, $\textit{ordD}[t_1] \leq  \textit{ordD}[t_2] \leq \textit{ordD}[t_3]$. Since $t_1 \ne t_2$, we have 
$\textit{ordD}[t_1] <  \textit{ordD}[t_2] \leq \textit{ordD}[t_3]$.
Therefore, $t_1 < t_3$ and $\textit{alternativeEntrance}[s]$ cannot be reset to the same value $t_1$ again. \qed
\end{proof}

\begin{thm}\label{thm2}
Algorithm \Algo{SuperBubble} reports all superbubbles, and only superbubbles,  in graph $G$ in decreasing topological order of their exit vertices in $\mathcal{O}(n+m)$-time.
\end{thm}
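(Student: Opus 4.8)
The plan is to prove Theorem~\ref{thm2} in three parts: correctness of the output (every reported pair is a superbubble), completeness (every superbubble is reported), and the running-time bound. For correctness, I would invoke Lemma~\ref{prop:me}, which already guarantees that subroutine \Algo{ValidateSuperBubble} reports $\langle s,t\rangle$ if and only if it is a superbubble; so it suffices to check that \Algo{SuperBubble} and \Algo{ReportSuperBubble} only ever pass pairs to \Algo{ValidateSuperBubble} that get reported exactly when valid, and that the ``Report'' line is reached only with a genuinely validated pair. The decreasing-topological-order claim on exit vertices follows because the outer \textbf{while} loop in \Algo{SuperBubble} consumes \textit{candidates} from the tail, exit candidates are stored in increasing \textit{ordD}, and each recursive call in \Algo{ReportSuperBubble} is launched on a tail element whose order is strictly smaller than the exit just reported; a short induction on the recursion depth formalises this.

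For completeness I would argue that no superbubble is missed. Fix a superbubble $\langle s,t\rangle$ of $G$. By Lemma~\ref{exit}(2) and Lemma~\ref{start}, $t$ is an exit candidate and $s$ is an entrance candidate, so both appear in \textit{candidates}. When the algorithm processes the exit candidate $t$ (either from the main loop or via a recursive call whose \textit{start} precedes $s$ — this containment is where nested superbubbles need care), it walks entrance candidates backwards from $\textit{previousEntrance}[t]$; by Lemma~\ref{minimal} the first valid entrance it meets is exactly $s$, provided the backward walk actually reaches $s$. The subtle point is that the walk jumps via \textit{alternativeEntrance}, so I must show these jumps never leapfrog over $s$: that is precisely Lemma~\ref{alternativeEntrance}, which says any entrance strictly between a returned alternative $t'$ and $e$ is invalid, hence skipping it loses nothing, combined with Lemma~\ref{orderAlternative} to see the walk is monotone and terminates. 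For the nested case I would note that when $\langle s,t\rangle$ is reported, \Algo{ReportSuperBubble} iterates over all tail candidates down to $s$, recursing on each exit candidate with \textit{start} $=$ \Call{Next}{s}; any superbubble nested inside $\langle s,t\rangle$ has its entrance and exit both strictly inside, so it is caught by one of these recursive calls, and we recurse.

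For the time bound I would account separately for: the topological sort and construction of the candidates list, $\mathcal{O}(n+m)$; the $\langle \mathcal{O}(n),\mathcal{O}(1)\rangle$ RMQ preprocessing of \textit{OutParent} and \textit{OutChild}, also $\mathcal{O}(n+m)$ including building the two arrays; each call to \Algo{ValidateSuperBubble}, which is $\mathcal{O}(1)$; each \textbf{DeleteTail} on \textit{candidates}, of which there are at most $2n$ in total across the whole execution since the list has at most $2n$ elements and nothing is ever reinserted during the backward phase; and finally the \textbf{while} loops inside \Algo{ReportSuperBubble} that walk \textit{alternativeEntrance} chains. The one genuinely non-trivial piece — and the main obstacle — is bounding the total work spent traversing these \textit{alternativeEntrance} chains over all (possibly recursive) calls. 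Here I would use Lemma~\ref{resetMark}: for a fixed entrance $s$, the value $\textit{alternativeEntrance}[s]$ is only ever reset to strictly larger \textit{ordD} values, so it takes at most $n$ distinct values over the whole run; moreover, once a step in the chain hits a previously stored \textit{alternativeEntrance}[s] (or $-1$, or a fixed point), the loop breaks, so each unit of chain-walking work is charged either to a first-time assignment of some \textit{alternativeEntrance} entry or to a \textbf{DeleteTail}. Summing, the chain traversals cost $\mathcal{O}(n)$ in aggregate, every \textbf{ReportSuperBubble} invocation not doing chain work is charged to the \textbf{DeleteTail} it performs, and hence the whole algorithm runs in $\mathcal{O}(n+m)$ time. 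I would close by combining the three parts.
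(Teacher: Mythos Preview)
Your three-part decomposition (soundness via Lemma~\ref{prop:me}, completeness via the walk through entrance candidates controlled by Lemma~\ref{alternativeEntrance}, and a separate running-time accounting) is exactly the structure of the paper's own proof, and your treatment of the first two parts is essentially the same as theirs, with a bit more explicit attention to the nested case and to the decreasing-order claim.

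The gap is in your running-time argument for the \textit{alternativeEntrance} chain walks. You invoke Lemma~\ref{resetMark} to say that for a fixed entrance $s$ the stored value $\textit{alternativeEntrance}[s]$ is only ever reset to strictly larger \textit{ordD} values, and then conclude that the chain traversals cost $\mathcal{O}(n)$ in aggregate. But monotonicity per entrance only bounds the number of resets of $\textit{alternativeEntrance}[s]$ by $\mathcal{O}(n)$ \emph{for each} $s$; summing over up to $n$ entrance candidates gives $\mathcal{O}(n^2)$, not $\mathcal{O}(n)$. Your charging scheme (``each unit of chain-walking work is charged to a first-time assignment of some \textit{alternativeEntrance} entry'') does not help, because every non-breaking iteration really does perform a fresh assignment to some $\textit{alternativeEntrance}[s]$, and nothing in your argument limits the total number of such assignments across all $s$ to $\mathcal{O}(n)$.

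The paper closes this hole differently: it charges each reset of an \textit{alternativeEntrance} value to an \emph{edge} --- specifically, the edge from the vertex realising the new \textit{outparent} minimum to its topologically furthest parent, which is being ``seen'' for the first time in that role --- and thereby bounds the total number of resets (over all entrance candidates) by $\mathcal{O}(m)$, not $\mathcal{O}(n)$. Since the final target is $\mathcal{O}(n+m)$ anyway, this is enough. You should either adopt that edge-based charging or supply an additional structural argument that genuinely limits the total number of \textit{alternativeEntrance} updates across all entrances; the per-entrance monotonicity from Lemma~\ref{resetMark} alone does not do it.
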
 
\begin{proof}
Consider an execution  of algorithm \Algo{SuperBubble}. Let superbubbles $\langle s_1,t_1\rangle,\cdots,\langle s_k,t_k\rangle $ be the successive superbubbles reported  just after the execution of Line~\ref{report} of subroutine \Algo{ReportSuperBubble}, where $\textit{ordD}(t_1) > \textit{ordD}(t_2) >\cdots>\textit{ordD}(t_k)$.

\begin{enumerate}
\item First, we show that each $\langle s_i,t_j\rangle$  reported by the algorithm in Line~\ref{report} is  a superbubble. This is proved by the correctness of Lemma~\ref{prop:me}. 

\item Second, no superbubble is missed out by the algorithm as proved by the following.
Subroutine \Algo{ReportSuperBubble} is called for each exit candidate in decreasing order. The entrance candidate for the superbubble (if any) ending at \textit{exit} will only be between $\textit{start}$ and $\textit{exit}$, where $\textit{start}$ is either the head of the the candidates list (when subroutine \Algo{ReportSuperBubble} is called from algorithm \Algo{SuperBubble}) or next candidate of the entrance of an outer superbubble (when called through a recursive call to identify a nested superbubble). A call to subroutine \Algo{ReportSuperBubble}{(\textit{start, exit})} checks the possible entrance candidates between $\textit{start}$ and $\textit{exit}$, starting with the nearest previous entrance candidate (to \textit{exit}). Subroutine \Algo{ValidateSuperBubble} either successfully validates an entrance candidate, or returns a ``-1", or returns an alternative entrance candidate. From Lemma~\ref{alternativeEntrance}, there cannot be any valid entrance between this alternative entrance and \textit{exit}. If this alternative entrance starts a sequence of entrances already checked for some exit candidate previously (as depicted by \textit{alternativeEntrance}), then all entrances of that sequence will be skipped, otherwise this alternative entrance will be tested. However, as mentioned in Section~\ref{subsec:validateMark}, none of the entrance candidates in the skipped sequence can be valid. Therefore, for each exit candidate, every potential entrance candidate is checked for validity, and those which are not considered are not valid. 

\item Third, the running time  of \Algo{SuperBubble} is $\mathcal{O}(n+m)$. Indeed, the running time of the \Algo{TopologicalSort} and computing the candidates list is $\mathcal{O}(n+m)$.
Furthermore, all list operations cost constant time each, and sum up to a linear cost of $\mathcal{O}(n)$, as there are at most $2n$ candidates in the list.  Finally, each call for subroutine \Algo{ValidateSuperBubble} costs $\mathcal{O}(1)$. The total number of times \Algo{ValidateSuperBubble}  is called is $\mathcal{O}(n+m)$.  This is because subroutine \Algo{ValidateSuperBubble} is called once for each exit candidate in Line 7 of subroutine \Algo{ReportSuperBubble}, and the total number of such calls is bounded by $\mathcal{O}(n)$. Additionally,  it is called every time a new \textit{alternativeEntrance} sequence is generated by subroutine \Algo{ValidateSuperBubble}.
It follows from Lemma~\ref{resetMark} that once an \textit{alternativeEntrance} sequence is reset, it cannot be generated again by subsequent calls to subroutine \Algo{ValidateSuperBubble}. This resetting of \textit{alternativeEntrance} for each entrance candidate  (Line~\ref{mark}) thus enables avoiding repeated checks of the same sequences of entrance candidates. Resetting is done every time an edge is considered for the first time between a vertex (in between an entrance candidate \textit{startVertex} and an exit candidate \textit{endVertex}) and its topologically furthest parent (whose order is less than that of \textit{startVertex}). Thus, the total number of times \textit{alternativeEntrance} will be reset (for all the entrance 
candidates) is bounded by $\mathcal{O}(m)$.

Therefore, the total running time for reporting all superbubbles in graph $G$ is $\mathcal{O}(n+m)$. 
\end{enumerate}
\qed 
\end{proof}

\section{Final Remarks}
We presented an $\mathcal{O}(n+m)$-time algorithm to compute all superbubbles in a directed acyclic graph, where $n$ is the number of vertices and $m$ is the number of edges, improving the best-known $\mathcal{O}(m \log m)$-time algorithm for this problem~\cite{SungSSBP14}. 
It is also interesting to note that in this type of graph, that is, constructed from sequences over a fixed-sized alphabet, the out-degree of each vertex is bounded by the size of the alphabet (four for DNA alphabet); therefore, the time complexity of the proposed algorithm is essentially linear in $n$.

Our immediate goal is to practically evaluate our algorithm and compare its implementation to an earlier result~\cite{OnoderaSS13}. It would also be interesting to investigate other superbubble-like structures in assembly graphs, such as complex bulges~\cite{bulges}.    


\bibliographystyle{elsarticle-num}
\bibliography{paperbib}

\end{document}